\documentclass[runningheads, a4paper]{llncs}
\usepackage[T1]{fontenc}
\usepackage{graphicx}
\usepackage{times}
\usepackage{silence}
\let\vec\relax
\usepackage{amsmath}
\usepackage{bm,mathrsfs,float,amssymb,subeqnarray,setspace,epstopdf,color}
\usepackage{enumitem}
\usepackage{soul}
\usepackage{wrapfig}
\usepackage{multirow,multicol,makecell}

\newtheorem{theorem*}{Finding}

\newtheorem{assumption}{Assumption}

\newcommand{\F}{\mathbb{F}}
\newcommand{\GL}{\mathrm{GL}}

\newcommand{\C}{\mathcal{C}}

\newcommand{\Hash}{\mathbf{Hash}}
\newcommand{\SamTen}{\mathbf{SamTen}}
\newcommand{\ComSub}{\mathbf{ComSub}}
\usepackage[hypertexnames=false]{hyperref}
\begin{document}
\title{Tensor-Action Ko--Lee Cryptography: A Framework and Structural Cryptanalysis of Commuting-Subgroup Constructions}
\titlerunning{Tensor-Action Ko--Lee Cryptography}

\author{Ziyan Chen\inst{1}, Yuqiao Wang\inst{2}}
\institute{}

\authorrunning{Z. Chen and Y. Wang}
%

\institute{University of Sydney, Australia \and Independent Researcher}

\maketitle

\begin{abstract}
Tensor isomorphism has been studied as an algebraic problem relevant to post-quantum cryptography, while its use in public-key encryption remains open. In this paper, we formulate a Ko--Lee-style framework for public-key encryption from cubic tensor actions and prove its formal correctness. We then show that the framework is generically insecure when the commuting matrix subgroups are given by public finite generating sets. Viewing a cubic tensor as a vector in a $d^3$-dimensional space, a linear decomposition attack recovers the shared tensor from the public transcript in polynomial time without recovering either secret action. We also cryptanalyze three natural commuting-subgroup constructions---field-extension, block-diagonal, and tensor-product constructions---and give toy-scale experiments illustrating their specific structural leakage. Finally, we examine the lower-dimensional leakage caused by scaled-block structure. The contribution is therefore a framework proposal together with its cryptanalysis; it does not provide a secure public-key encryption scheme.
\end{abstract}

\section{Introduction}

Public-key encryption (PKE) is a fundamental primitive in modern cryptography, which enables two parties to communicate securely over an open channel. Classical constructions such as RSA and elliptic-curve cryptography rely on number-theoretic
hardness assumptions \cite{RivestShamirAdleman1978,Miller1986,Koblitz1987}, but Shor's
algorithm shows that factoring and discrete logarithms are vulnerable to quantum attacks
\cite{Shor1994}. This motivates the search for alternative algebraic foundations for
post-quantum public-key cryptography.

One important direction has been non-commutative cryptography. Early proposals include
the Anshel--Anshel--Goldfeld (AAG) scheme and the Ko--Lee cryptosystem, especially over
braid groups \cite{AnshelAnshelGoldfeld1999,KoLeeCheonHanKangPark2000BraidPKE,Dehornoy2000BraidCrypto}. Braid groups were attractive
partly because of their rich combinatorial structure and faithful linear representations
\cite{Bigelow2001,Krammer2002}. Polycyclic groups were later proposed as an alternative
non-commutative platform \cite{EickKahrobaei2004}. However, the subsequent history of
group-based cryptography revealed substantial cryptanalytic obstacles. In braid-based
settings, progress on conjugacy algorithms, including ultra summit set methods, weakened
the original platform assumptions \cite{Gebhardt2006,BirmanGebhardtGonzalezMeneses2008}.
Polynomial-time attacks on the braid Diffie--Hellman conjugacy problem and length-based
attacks against braid-group and polycyclic-group constructions were also developed
\cite{CheonJun2003LDA,MyasnikovUshakov2007LBA,GarberKahrobaeiLam2015}, while
Shpilrain and Ushakov observed that the conjugacy search problem is often neither necessary
nor sufficient for breaking Ko--Lee-type protocols \cite{ShpilrainUshakov2006CSPUnnecessary}.

These developments motivate careful analysis of how visible algebraic structure can undermine
the intended hardness assumption in non-commutative protocols.
In this context, the recent proposal of general linear group actions on tensors is especially
interesting \cite{JiQiaoSongYun2019TensorCrypto}. The natural action  on cubic tensors provides a multilinear setting in which the associated inversion problem,
namely tensor isomorphism, has been proposed as a candidate source of post-quantum hardness
\cite{JiQiaoSongYun2019TensorCrypto}. At the same time, this hardness is highly sensitive to
the structure of the tensor instance: broad classes of tensors with sparse, triangular, or
otherwise rigid support admit efficient orbit recovery or isomorphism testing
\cite{NarayananQiaoTang2024,RanSamardjiska2024}. This makes instance generation a
central issue. To address this, strong-key sampling has recently been proposed as a way to
avoid obviously degenerate tensor instances \cite{NarayananStrongKeys2025}.
Motivated by these works, we ask whether tensor actions can support a public-key encryption
framework. We take the Ko--Lee paradigm as a natural starting point. The essential algebraic
requirement in Ko--Lee-type constructions is the ability to sample secret actions from two
commuting subsets \cite{KoLeeCheonHanKangPark2000BraidPKE}. Translating this idea from group
conjugation to tensor actions leads to a natural tensor-based analogue in which the main
instantiation problem becomes: how can one generate two commuting subgroups of
$\GL_d(\F_p)$ whose induced tensor actions do not immediately leak exploitable
algebraic structure?

In this paper, we propose a Ko--Lee-style framework for tensor-action public-key encryption
and analyze its security. Our main cryptanalytic result shows that, when the commuting
subgroups are described by public finite generating sets, the framework is subject to a
generic linear decomposition attack~\cite{MyasnikovRomankov2014LDA}. The attack works in
the $d^3$-dimensional tensor space and recovers the shared tensor from the public transcript
without recovering the secret matrices. Thus changing only the tensor sampler or the choice
of publicly generated commuting subgroups cannot secure the present linear-action template.

Our generic attack is also closely related to the broader algebraic-span cryptanalysis
framework of Ben-Zvi, Kalka, and Tsaban, which replaces nonlinear group constraints by
computations in the linear spans generated by their matrix representations
\cite{BenZviKalkaTsaban2018}. The attack in Section~\ref{sec:generic-lda} should therefore
be viewed as an application of the classical linear-decomposition/algebraic-span principle
to the present tensor-action Ko--Lee framework.

We additionally study natural candidate constructions
of commuting subgroups in $\GL_d(\F_p)$, including field-extension-based abelian
subgroups, block-diagonal commuting non-abelian subgroups, and tensor-product commuting
subgroups, and show that these obvious choices are insecure. In particular, they expose
diagonal, fixed-subspace, tensor-separable, or lower-dimensional matrix structure that makes
the induced shared-key problem easier than intended. These construction-specific analyses
remain useful because they identify explicit failure mechanisms and support the empirical
evaluation, even though the generic attack already breaks the framework.

The present work should therefore be viewed as a framework proposal together with its
cryptanalysis. It introduces a tensor-action analogue of the Ko--Lee paradigm, proves that the
public finite-generator version is generically insecure, and studies additional failure
mechanisms in several immediate instantiations.

\paragraph{Our contributions.}
The main contributions of the paper are as follows.
\begin{itemize}[leftmargin=*]
    \item We formalize a Ko--Lee-style public-key encryption framework built from the natural action of $\GL_d(\F_p)^3$ on cubic tensors, and prove its basic correctness.
    \item We give a deterministic polynomial-time linear decomposition attack that recovers the shared tensor for any publicly finitely generated pair of commuting action subgroups.
    \item We analyze three natural commuting-subgroup constructions in $\GL_d(\F_p)$, identify their specific structural leakage, and give toy-scale empirical illustrations of the corresponding recovery procedures.
    \item We analyze the lower-dimensional leakage created by scaled-block structure, without presenting this structural reduction as a complete attack in every parameter regime.
\end{itemize}

The contribution is a formal framework and its cryptanalysis, not a secure concrete
instantiation. In particular, the generic attack shows that subgroup and tensor sampling
alone cannot repair the framework while its public linear-action structure is retained.

\paragraph{Roadmap.}
Section~\ref{sec:strong-key-sampling} fixes notation and recalls the heuristic tensor sampler used throughout the paper. Section~\ref{sec:tensor-ko-lee} introduces the Ko--Lee-style tensor-action framework. Section~\ref{sec:commuting-subgroup-instantiation} presents natural candidate constructions for $\ComSub(p,d)$, while Section~\ref{sec:pitfalls-commuting} gives construction-specific analyses. Section~\ref{sec:empirical-illustration} gives toy-scale empirical illustrations of those analyses. Section~\ref{sec:generic-lda} presents the generic linear decomposition attack. Section~\ref{sec:discussion} studies the influence of scaled-block structure. Finally, the appendices collect the proofs and auxiliary details.

\section{Preliminaries}
\label{sec:prelim}

\subsection{Tensors, $\GL$-action, and 3-Tensor Isomorphism}
\paragraph{Cubic tensors.} Fix a prime $p$ and let $\F_p$ be the finite field with $p$ elements. For an integer $d\ge 1$, a \emph{(cubic) 3-tensor} over $\F_p$ is an element:
\[
T \in \F_p^{d}\otimes \F_p^{d}\otimes \F_p^{d} \quad \text{ or }\quad T \in \F_p^{d \times d \times d}
\]
equivalently a three-dimensional array $T=(T_{i,j,k})_{i,j,k\in[d]}$ where $[d]:=\{1,\dots,d\}$. It is often convenient to represent $T$ as a list of slices along the third mode, namely
\[
T=[T_1,\dots,T_d],\qquad T_i\in \F_p^{d\times d}\text{ for }i\in[d].
\]

\paragraph{Mode-wise multiplication.}
For $A,B,C\in \GL_d(\F_p)$, define the natural action \cite{JiQiaoSongYun2019TensorCrypto}:
\[
(A,B,C)\cdot T \in \F_p^{d}\otimes \F_p^{d}\otimes \F_p^{d}
\]
by its entries
\begin{equation}
\label{eq:tensor-action}
\bigl((A,B,C)\cdot T\bigr)_{a,b,c}
:= \sum_{i,j,k\in[d]} A_{a,i}\,B_{b,j}\,C_{c,k}\,T_{i,j,k}.
\end{equation}
This is the standard $\GL_d(\F_p)^3$ action underlying tensor isomorphism.

\paragraph{3-Tensor Isomorphism (3TI).}
Given two tensors $T,T'\in \F_p^{d}\otimes \F_p^{d}\otimes \F_p^{d}$, find $(A,B,C)\in \GL_d(\F_p)^3$ such that $T'=(A,B,C)\cdot T$, assuming such a triple exists.

\subsection{Composition and commutativity of 3-tensor actions}
\label{subsec:tensor-action-composition}

In this subsection we record a useful algebraic property of the
$\GL_d(\F_p)^3$ action on 3-tensors.  Recall that we view
\[
T \in \F_p^{d}\otimes \F_p^{d}\otimes \F_p^{d}
\]
as a list of mode-3 slices
\[
T = [T_1,\dots,T_d],\qquad T_k\in \F_p^{d\times d},
\]
and redefine the action of $(A,B,C)\in \GL_d(\F_p)^3$ by
\begin{equation}
\label{eq:slice-action}
\bigl((A,B,C)\cdot T\bigr)_c
\;=\;
\sum_{k=1}^d C_{c,k}\, A T_k B^{\top},
\qquad c\in[d].
\end{equation}

The following lemma records the basic composition rule.
\begin{lemma}[Composition rule for 3-tensor actions]
\label{lem:tensor-action-composition}
Let $A,B,C,D,E,F\in \GL_d(\F_p)$ and let
$T=[T_1,\dots,T_d]\in \F_p^{d}\otimes \F_p^{d}\otimes \F_p^{d}$. Then, for every $c\in[d]$,
\begin{align*}
&\bigl((A,B,C)\cdot((D,E,F)\cdot T)\bigr)_c\\
= & \bigl((AD,\;BE,\;CF)\cdot T\bigr)_c.
\end{align*}
Equivalently, as operators on tensors,
\[
(A,B,C)\circ(D,E,F) \;=\; (AD,\;BE,\;CF).
\]
\end{lemma}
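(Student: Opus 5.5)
The plan is a direct computation with the slice formula \eqref{eq:slice-action}, applied twice. The only tools needed are bilinearity of matrix multiplication and the identity $(BE)^{\top}=E^{\top}B^{\top}$.

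\textbf{Step 1: the inner action.} Set $S:=(D,E,F)\cdot T$. By \eqref{eq:slice-action}, its slices are
\[
S_k=\sum_{\ell=1}^d F_{k,\ell}\,D T_\ell E^{\top},\qquad k\in[d].
\]

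\textbf{Step 2: the outer action.} Apply \eqref{eq:slice-action} again with $(A,B,C)$ to $S$. Substitute the expression for $S_k$ and use linearity of $X\mapsto AXB^{\top}$ to pull the scalars $F_{k,\ell}$ outside:
\[
\bigl((A,B,C)\cdot S\bigr)_c=\sum_{k=1}^d C_{c,k}\,A S_k B^{\top}
=\sum_{k,\ell} C_{c,k}F_{k,\ell}\,(AD)\,T_\ell\,(E^{\top}B^{\top}).
\]

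\textbf{Step 3: regroup.} Exchange the two finite sums and collect the scalar coefficients. The coefficient of $T_\ell$ is $\sum_k C_{c,k}F_{k,\ell}=(CF)_{c,\ell}$, and the right-hand factor is $E^{\top}B^{\top}=(BE)^{\top}$. The result is therefore
\[
\sum_{\ell=1}^d (CF)_{c,\ell}\,(AD)\,T_\ell\,(BE)^{\top},
\]
which by \eqref{eq:slice-action} is exactly $\bigl((AD,BE,CF)\cdot T\bigr)_c$. Since this holds for every $c\in[d]$, the operator identity $(A,B,C)\circ(D,E,F)=(AD,BE,CF)$ follows.

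Nothing here is a real obstacle. The one point to handle with care is that the entrywise definition \eqref{eq:tensor-action} and the slice form \eqref{eq:slice-action} must index the slices consistently. Expanding $(AT_kB^{\top})_{a,b}=\sum_{i,j}A_{a,i}T_{i,j,k}B_{b,j}$ shows the two forms agree, so the lemma holds for either description of the action. The transpose order in Step 3 is the other place a sign-of-order error could creep in, and the identity $(BE)^{\top}=E^{\top}B^{\top}$ settles it.
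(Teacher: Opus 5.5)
Your proposal is correct and follows the same route as the paper: the paper's own justification is exactly to expand the slice formula \eqref{eq:slice-action} for the two successive actions and interchange the order of summation. Your Steps 1--3 carry out that computation in full, including the regrouping $\sum_k C_{c,k}F_{k,\ell}=(CF)_{c,\ell}$ and $E^{\top}B^{\top}=(BE)^{\top}$.
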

The componentwise operation follows by expanding Equation~\eqref{eq:slice-action} for the two successive actions and then interchanging the order of summation. As an immediate consequence, we obtain the following corollary.

\begin{corollary}[Commuting action]
\label{cor:commuting-triples}
Let $A,B,C,D,E,F\in \GL_d(\F_p)$. If
\[
AD=DA,\qquad BE=EB,\qquad CF=FC,
\]
then for every tensor $T$,
\[
(A,B,C)\cdot\bigl((D,E,F)\cdot T\bigr)
=
(D,E,F)\cdot\bigl((A,B,C)\cdot T\bigr).
\]
\end{corollary}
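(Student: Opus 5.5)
The plan is to derive Corollary~\ref{cor:commuting-triples} directly from Lemma~\ref{lem:tensor-action-composition}, applying the composition rule twice, once in each order. No new summation manipulation is needed, because the lemma already packages the interchange of sums.

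\textbf{Step 1.} Apply Lemma~\ref{lem:tensor-action-composition} with outer triple $(A,B,C)$ and inner triple $(D,E,F)$. This gives
\[
(A,B,C)\cdot\bigl((D,E,F)\cdot T\bigr)=(AD,BE,CF)\cdot T .
\]

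\textbf{Step 2.} Apply the same lemma with the roles swapped: outer triple $(D,E,F)$, inner triple $(A,B,C)$. The lemma is stated for arbitrary six invertible matrices, so this is legitimate. It gives
\[
(D,E,F)\cdot\bigl((A,B,C)\cdot T\bigr)=(DA,EB,FC)\cdot T .
\]

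\textbf{Step 3.} By hypothesis $AD=DA$, $BE=EB$ and $CF=FC$, so the two triples $(AD,BE,CF)$ and $(DA,EB,FC)$ are literally equal. They therefore act identically on $T$ by Equation~\eqref{eq:slice-action}, slice by slice for every $c\in[d]$. Chaining the two equalities gives the claim.

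There is essentially no obstacle. The only point to check is that the lemma's composition convention is the one used above: the left factor in each product comes from the outer action, so the products are $AD$ and $DA$ rather than the other way round. This is harmless here, since commutativity makes the order irrelevant.

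One could also remark that the argument proves slightly more. The two actions commute on $T$ whenever the triples satisfy $(AD,BE,CF)=(DA,EB,FC)$, which is exactly the stated hypothesis.
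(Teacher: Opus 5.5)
Your proof is correct and is exactly the paper's argument: the paper records the corollary as an immediate consequence of Lemma~\ref{lem:tensor-action-composition}, applied in both orders so that the hypotheses identify $(AD,BE,CF)$ with $(DA,EB,FC)$. Your closing remark is not actually a strengthening, since $(AD,BE,CF)=(DA,EB,FC)$ is precisely the stated hypothesis; a genuine strengthening would allow the two product triples to differ by scalars $(\lambda,\mu,\nu)$ with $\lambda\mu\nu=1$, because the action is trilinear in the triple.
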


\subsection{Heuristic Tensor Sampling}
\label{sec:strong-key-sampling}

The computational hardness of the 3TI problem is highly sensitive to the algebraic structure of the input tensor. For example, tensors with sparse support, block-diagonal structure, or triangular patterns may allow the action of
$\GL_d(\F_p)^3$ to be partially linearized, yielding efficient recovery of the isomorphism
witness \cite{NarayananQiaoTang2024,RanSamardjiska2024}.

Motivated by recent work on strong-key sampling \cite{NarayananStrongKeys2025}, we
consider the following heuristic sampling procedure for cubic tensors
\[
T\in \F_p^{d\times d\times d}.
\]
We emphasize that the rigorous results in \cite{NarayananStrongKeys2025} are developed
for boundary formats, whereas the cubic format considered here is an interior format.
Accordingly, the procedure below should be regarded only as a heuristic way to generate
structured but scrambled cubic tensors, not as a source of security for the encryption
template.

\paragraph{Step 1: Diagonal seed tensor.}
Sample a diagonal tensor
\[
T_0 := \sum_{i=1}^{d} \lambda_i\, e_i \otimes e_i \otimes e_i,
\]
where $\{e_1,\dots,e_d\}$ is the standard basis of $\F_p^d$ and
\[
\lambda_i \xleftarrow{\$} \F_p^\ast
\qquad \text{independently for all } i\in[d].
\]
Here $\xleftarrow{\$}$ denotes uniform sampling from the indicated set.

\paragraph{Step 2: Random invertible scrambling.}
Sample three independent invertible matrices
\[
U,V,W \xleftarrow{\$} \GL_d(\F_p),
\]
and define
\begin{equation}
T := (U,V,W)\cdot T_0.
\end{equation}

\begin{definition}
\label{def:strong-key-sampling}
We write
\[
T \leftarrow \SamTen(p,d)
\]
to denote sampling a tensor $T\in \F_p^{d\times d\times d}$ according to the heuristic procedure above.
\end{definition}

\begin{remark}[Heuristic status in the cubic case]
\label{rem:heuristic-cubic-sampling}
The sampling procedure above is used only heuristically in the cubic $d\times d\times d$ setting. Unlike the boundary-format setting studied in
\cite{NarayananStrongKeys2025}, it does not currently provide a rigorous guarantee of non-degeneracy or cryptographic security in the cubic case. In particular, we do not claim that tensors sampled by Definition~\ref{def:strong-key-sampling} are secure against tensor-action
inversion or related attacks.

Thus, in the present paper, $\SamTen(p,d)$ should be understood only as a convenient
heuristic sampler for discussing the framework. Regardless of the tensor distribution, the
generic attack in Section~\ref{sec:generic-lda} recovers the shared tensor in the present
public linear-action framework. Consequently, improving the tensor sampler alone cannot
make this encryption template secure.
\end{remark}

\section{A Tensor-Action Ko–Lee Framework}
\label{sec:tensor-ko-lee}

\paragraph{Overview and Setup}
To obtain two commuting subgroups for the Ko--Lee-style key agreement, we make
the following assumption.
\begin{assumption}[Public Commuting Subgroups]
\label{ass:comsub}
Given a prime $p$ and a dimension $d$, we assume access to a public procedure
$\ComSub(p,d)$. It outputs two element-wise commuting subgroups
$\mathcal{X},\mathcal{Y}\leq\GL_d(\F_p)$ together with finite generating sets; we write
\[
(\mathcal{X}, \mathcal{Y}) \leftarrow \ComSub(p,d)
\]
\end{assumption}

\paragraph{Scope.}
The construction below is a formal template. Proposition~\ref{prop:correctness} establishes
only that encryption and decryption are algebraically consistent. It does not establish a
security property. Section~\ref{sec:generic-lda} shows that the public finite-generator
version of the template is generically insecure.

In the protocol, Alice and Bob communicate over an insecure channel, where Alice is the receiver and Bob is the sender. Let $p \in \mathbb{N}$ be prime, $\F_p$ be the base field, and $d$ denote the tensor dimension. To sample the base tensor used in the protocol, we recall the heuristic tensor sampler from Definition~\ref{def:strong-key-sampling}:
\[
T \leftarrow \SamTen(p,d)
\]
In the protocol below, Alice samples her secret action from $\mathcal X$ and
Bob samples the ephemeral action from $\mathcal Y$. In the classical Ko--Lee setting in Appendix~\ref{sec:ko-lee-template}, the intended hard problem is based on conjugation. In our framework, the analogous role is played by tensor-action inversion. As emphasized earlier, however, the difficulty of inverting the action depends strongly on the tensor family and on the subgroup structure used in the instantiation \cite{JiQiaoSongYun2019TensorCrypto}.

\subsubsection*{Key generation}
Alice, the receiver, samples:
\begin{enumerate}
    \item $T \leftarrow \SamTen(p,d)$ as in Section~\ref{sec:strong-key-sampling}.
    \item Independently sample $A,B,C \leftarrow \mathcal{X}$.
\end{enumerate}
Output the secret key and public key:
\[
sk := (A,B,C),
\qquad
pk := \bigl(T,\ (A,B,C)\cdot T\bigr).
\]

\subsubsection*{Encryption}
To encrypt a message $m\in\{0,1\}^\ell$ for Alice under $pk=(T,(A,B,C)\cdot T)$, Bob does the following:
\begin{enumerate}
    \item Sample $D,E,F \leftarrow \mathcal{Y}$.
    \item Compute the shared tensor
    \[
    S := (D,E,F)\cdot\bigl((A,B,C)\cdot T\bigr).
    \]
    and hash it:
    \[
    h := \Hash(S)\in\{0,1\}^\ell.
    \]
    \item Send the ciphertext to Alice:
    \[
    \C := (U,c) := \bigl((D,E,F)\cdot T,\ m\oplus h\bigr).
    \]
\end{enumerate}

\subsubsection*{Decryption}
Given $\C=(U,c)$ and $sk=(A,B,C)$, Alice computes the shared tensor
\[
S' := (A,B,C)\cdot U,
\]
and hashes it to recover the raw message $m'$ as
\[
h' := \Hash(S'),
\qquad
m' := c\oplus h',
\]
where $\oplus$ denotes bitwise exclusive OR.

\subsubsection*{Correctness}
We now prove the correctness of our Ko--Lee tensor PKE scheme.
\begin{proposition}[Correctness of Ko--Lee tensor PKE]
\label{prop:correctness}
For all messages $m\in\{0,1\}^\ell$, decryption outputs $m'=m$.
\end{proposition}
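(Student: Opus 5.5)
The plan is to show that Alice's recomputed tensor $S'$ coincides with Bob's shared tensor $S$. It then follows that $h'=h$, and the XOR cancels. Unwinding the definitions, $S=(D,E,F)\cdot\bigl((A,B,C)\cdot T\bigr)$ and $S'=(A,B,C)\cdot U=(A,B,C)\cdot\bigl((D,E,F)\cdot T\bigr)$. So everything reduces to showing that the two actions commute on $T$.

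First, I will record that $A,B,C\in\mathcal X$ and $D,E,F\in\mathcal Y$. By Assumption~\ref{ass:comsub}, the subgroups $\mathcal X$ and $\mathcal Y$ commute element-wise. In particular $AD=DA$, $BE=EB$ and $CF=FC$. This pairing is the only point needing care: the commutation is required mode by mode, not among all six matrices. It holds because every entry of Alice's triple lies in $\mathcal X$ and every entry of Bob's triple lies in $\mathcal Y$. Second, I will invoke Corollary~\ref{cor:commuting-triples}, which gives $(A,B,C)\cdot((D,E,F)\cdot T)=(D,E,F)\cdot((A,B,C)\cdot T)$, that is, $S'=S$. If one prefers to argue directly, Lemma~\ref{lem:tensor-action-composition} gives $S'=(AD,BE,CF)\cdot T=(DA,EB,FC)\cdot T=S$.

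Third, $\Hash$ is a deterministic function, so $h'=\Hash(S')=\Hash(S)=h$. Hence
\[
m'=c\oplus h'=(m\oplus h)\oplus h=m,
\]
using associativity of XOR and $h\oplus h=0^\ell$. There is no real obstacle. The only subtlety is to say explicitly that the hash is deterministic and public, so that both parties evaluate the same function. It is also worth noting that correctness holds for every $T$ and every choice of secrets, with probability one, independent of the sampler $\SamTen$.
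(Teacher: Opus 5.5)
Your proposal is correct and follows essentially the same route as the paper. Element-wise commutativity of $\mathcal X$ and $\mathcal Y$ gives $AD=DA$, $BE=EB$, $CF=FC$; Corollary~\ref{cor:commuting-triples} then yields $S'=S$, so $h'=h$ and the XOR cancels (your direct route through Lemma~\ref{lem:tensor-action-composition} and the remark that $\Hash$ is deterministic are harmless additions).
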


\begin{proof}
By construction, elements of $\mathcal{X}$ commute with elements of $\mathcal{Y}$. Thus,
for $A,B,C\in\mathcal{X}$ and $D,E,F\in\mathcal{Y}$ we have
\[
AD=DA,\qquad BE=EB,\qquad CF=FC.
\]
Therefore Corollary~\ref{cor:commuting-triples} yields
\begin{align*}
    S'&=(A,B,C)\cdot\bigl((D,E,F)\cdot T\bigr)\\
    & =(D,E,F)\cdot\bigl((A,B,C)\cdot T\bigr)=S.
\end{align*}

Thus $h'=\Hash(S')=\Hash(S)=h$, and
\[
m' = c\oplus h' = (m\oplus h)\oplus h = m.
\]
\end{proof}

\begin{remark}[Limitation of the linear-action template]
\label{rem:linear-action-limitation}
The correctness proof uses only commutativity of the two action families. The same
commutativity, together with their public linear representations, enables the generic linear
decomposition attack in Section~\ref{sec:generic-lda}. Therefore no choice of
$\SamTen(p,d)$ and publicly finitely generated $\ComSub(p,d)$ can provide security within
the present template. A potentially secure variant would have to change at least one of the
features used by the attack, rather than only replacing the concrete samplers; designing such
a different framework is outside the scope of this paper.
\end{remark}

\section{Instantiations of Commuting Subgroups in $\GL_d(\F_p)$}
\label{sec:commuting-subgroup-instantiation}

The tensor PKE framework in Section~\ref{sec:tensor-ko-lee} requires an instantiation of
the commuting-subgroup generation procedure $\mathsf{ComSub}(p,d)$. In particular, the
formal correctness argument needs two commuting subgroups in $\GL_d(\F_p)$. In this
section, we record several natural constructions and their public descriptions. None of them
provides security: Section~\ref{sec:pitfalls-commuting} gives construction-specific structural analyses,
and Section~\ref{sec:generic-lda} later gives an attack on the public finite-generator
framework independently of the subgroup choice.

\subsection{Construction via field extensions (Singer-type embeddings)}
\label{subsec:commuting-field-extension}

A standard and conceptually clean way to generate commuting subgroups of
$\GL_d(\F_p)$ is to embed the multiplicative group of a degree-$d$
extension field into $\GL_d(\F_p)$ via $\mathbb{F}_p$-linear multiplication
maps. This construction is widely known as a \emph{Singer cycle} (or, more generally, a
\emph{field-multiplication subgroup}). In this way one obtains an abelian subgroup,
say $\mathcal{H}$, and two commuting subgroups $\mathcal{X}$ and $\mathcal{Y}$ may then
be chosen as subgroups of $\mathcal{H}$. We now describe how such an abelian subgroup
arises inside $\GL_d(\F_p)$ from field multiplication.

\paragraph{Algebra embedding.}
Let $\mathbb{K}=\mathbb{F}_{p^d}$ and fix an $\mathbb{F}_p$-basis
$\mathcal{B}=(\omega_1,\dots,\omega_d)$ of $\mathbb{K}$. For each $\alpha\in\mathbb{K}$,
consider the $\mathbb{F}_p$-linear endomorphism
\[
m_\alpha:\mathbb{K}\to\mathbb{K},\qquad x\mapsto \alpha x.
\]
Let $\Phi(\alpha)\in \mathrm{Mat}_d(\mathbb{F}_p)$ be the matrix of $m_\alpha$ in the basis
$\mathcal{B}$, that is, the unique matrix satisfying
\[
[\alpha x]_{\mathcal{B}}=\Phi(\alpha)\,[x]_{\mathcal{B}}\qquad\forall x\in\mathbb{K},
\]
where $[\cdot]_{\mathcal{B}}\in \mathbb{F}_p^d$ denotes the coordinate vector in the basis
$\mathcal{B}$. The following lemma states that this construction indeed yields an
abelian subgroup of $\GL_d(\F_p)$.

\begin{lemma}[Field embedding into matrices]
\label{lem:field-embed}
The map $\Phi:\mathbb{K}\to \mathrm{Mat}_d(\mathbb{F}_p)$ is an injective
$\mathbb{F}_p$-algebra homomorphism:
\[
\Phi(\alpha+\beta)=\Phi(\alpha)+\Phi(\beta),\qquad
\Phi(\alpha\beta)=\Phi(\alpha)\Phi(\beta),\qquad
\Phi(1)=I_d.
\]
Moreover, $\alpha\neq 0$ if and only if $\Phi(\alpha)\in \GL_d(\F_p)$, and
\[
\Phi(\mathbb{K}^\times)\le \GL_d(\F_p)
\quad\text{is abelian.}
\]
\end{lemma}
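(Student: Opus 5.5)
The plan is to work entirely through the defining identity $[\alpha x]_{\mathcal B}=\Phi(\alpha)[x]_{\mathcal B}$, using two facts: the coordinate map $[\cdot]_{\mathcal B}:\mathbb K\to\F_p^d$ is an $\F_p$-linear isomorphism, and a matrix is determined by its action on all vectors of $\F_p^d$.

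\textbf{Homomorphism properties.} For additivity, take any $x\in\mathbb K$. Linearity of coordinates gives
\[
[(\alpha+\beta)x]_{\mathcal B}=[\alpha x]_{\mathcal B}+[\beta x]_{\mathcal B}=(\Phi(\alpha)+\Phi(\beta))[x]_{\mathcal B}.
\]
Since $[x]_{\mathcal B}$ ranges over all of $\F_p^d$, uniqueness of the representing matrix yields $\Phi(\alpha+\beta)=\Phi(\alpha)+\Phi(\beta)$. The same argument gives $\Phi(c\alpha)=c\Phi(\alpha)$ for $c\in\F_p$, since $[c\alpha x]_{\mathcal B}=c[\alpha x]_{\mathcal B}$.

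For multiplicativity, associativity in $\mathbb K$ gives
\[
[(\alpha\beta)x]_{\mathcal B}=[\alpha(\beta x)]_{\mathcal B}=\Phi(\alpha)[\beta x]_{\mathcal B}=\Phi(\alpha)\Phi(\beta)[x]_{\mathcal B}.
\]
Uniqueness again gives $\Phi(\alpha\beta)=\Phi(\alpha)\Phi(\beta)$. Finally $m_1=\mathrm{id}$, so $\Phi(1)=I_d$.

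\textbf{Injectivity.} If $\Phi(\alpha)=0$, apply the identity to $x=1$: then $[\alpha]_{\mathcal B}=0$, so $\alpha=0$. Since $\Phi$ is linear, this shows its kernel is trivial.

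\textbf{Invertibility criterion.}
\begin{itemize}
\item If $\alpha\neq0$, multiplicativity gives $\Phi(\alpha)\Phi(\alpha^{-1})=\Phi(1)=I_d$, so $\Phi(\alpha)\in\GL_d(\F_p)$.
\item If $\alpha=0$, then $\Phi(0)=0$ by additivity, and the zero matrix is not invertible because $d\ge1$.
\end{itemize}

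\textbf{Abelian subgroup.} Restricted to $\mathbb K^\times$, the map $\Phi$ is a group homomorphism into $\GL_d(\F_p)$. Its image is therefore a subgroup. Commutativity of $\mathbb K$ gives
\[
\Phi(\alpha)\Phi(\beta)=\Phi(\alpha\beta)=\Phi(\beta\alpha)=\Phi(\beta)\Phi(\alpha),
\]
so the image is abelian.

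No step is a real obstacle. The only point needing care is the repeated use of uniqueness of the representing matrix, which relies on $[\cdot]_{\mathcal B}$ being surjective onto $\F_p^d$.
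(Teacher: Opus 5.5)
Your proof is correct and takes the same route as the paper. The paper's only written justification is the commutativity chain $m_\alpha m_\beta=m_{\alpha\beta}=m_{\beta\alpha}=m_\beta m_\alpha$, which is your final display carried through $\Phi$; your checks of additivity, multiplicativity, injectivity (evaluating at $x=1$) and the invertibility criterion spell out routine facts the paper leaves implicit.
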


The commutativity is immediate from field multiplication: $m_\alpha m_\beta=m_{\alpha\beta}=m_{\beta\alpha}=m_\beta m_\alpha$.

After Lemma~\ref{lem:field-embed}, we define the abelian (indeed cyclic) subgroup
\[
\mathcal{H}:=\Phi(\mathbb{K}^\times)\le \GL_d(\F_p),
\qquad |\mathcal{H}|=p^d-1.
\]
Then any two subgroups $X,Y\le \mathcal{H}$ commute elementwise. Concretely, one may set
\[
X=\langle \Phi(\alpha)\rangle,\qquad Y=\langle \Phi(\beta)\rangle,
\]
for randomly sampled $\alpha,\beta\in\mathbb{K}^\times$ (or, more generally, choose
$X,Y$ as random subgroups of $\mathcal{H}$). Since $\mathcal{H}$ is abelian, for all $x\in X$ and
$y\in Y$ we have
\[
xy=yx.
\]
This yields an immediate and efficient sampler for commuting private keys: sample
$\alpha$ uniformly in $\mathbb{K}^\times$, compute $\Phi(\alpha)\in \GL_d(\F_p)$,
and then, if desired, restrict to a prescribed subgroup by exponentiation.

\paragraph{Concrete instantiation via an irreducible polynomial.}
A convenient explicit realization avoids constructing $\mathbb{K}$ abstractly. Choose an
irreducible polynomial $f(t)\in\mathbb{F}_p[t]$ of degree $d$, and let
$C_f\in \mathrm{Mat}_d(\mathbb{F}_p)$ be its companion matrix. Then
\[
\mathbb{F}_p[C_f]\cong \mathbb{F}_p[t]/(f(t))\cong \mathbb{F}_{p^d},
\]
and the unit group $\mathbb{F}_p[C_f]^\times$ is a field-multiplication subgroup of
$\GL_d(\F_p)$. In particular,
\[
\mathcal{H}=\langle C_f\rangle \le \GL_d(\F_p)
\quad\text{has order }p^d-1\text{ when }C_f\text{ is a Singer generator.}
\]

\subsection{A block-diagonal construction of commuting non-abelian subgroups}
\label{subsec:block-diagonal-commuting}

Another natural way to construct two commuting subgroups of
$\GL_d(\F_p)$ is to split the ambient space into two complementary
coordinate subspaces of equal dimension and let each subgroup act nontrivially on only
one half. Let
\[
d=2m.
\]
We define
\[
X:=\left\{
\begin{pmatrix}
\bar X & 0\\
0 & I_m
\end{pmatrix}
:\bar X\in \GL_m(\F_p)
\right\},
\qquad
Y:=\left\{
\begin{pmatrix}
I_m & 0\\
0 & \bar Y
\end{pmatrix}
:\bar Y\in \GL_m(\F_p)
\right\}.
\]
Thus, elements of $X$ act nontrivially only on the first $m$ coordinates, while elements
of $Y$ act nontrivially only on the last $m$ coordinates. In particular, if
\[
A=\begin{pmatrix}\bar A&0\\0&I_m\end{pmatrix},\quad
B=\begin{pmatrix}\bar B&0\\0&I_m\end{pmatrix},\quad
C=\begin{pmatrix}\bar C&0\\0&I_m\end{pmatrix}\in X,
\]
and
\[
D=\begin{pmatrix}I_m&0\\0&\bar D\end{pmatrix},\quad
E=\begin{pmatrix}I_m&0\\0&\bar E\end{pmatrix},\quad
F=\begin{pmatrix}I_m&0\\0&\bar F\end{pmatrix}\in Y,
\]
with
\[
\bar A,\bar B,\bar C,\bar D,\bar E,\bar F\in \GL_m(\F_p),
\]
then every element of $X$ commutes with every element of $Y$. We also claim that this
construction yields two commuting subgroups, while each subgroup may itself be
non-commutative.

\begin{lemma}
\label{lem:block-diagonal-commuting}
The sets $X$ and $Y$ defined above are subgroups of $\GL_{2m}(\F_p)$. Moreover:
\begin{enumerate}
    \item for every $x\in X$ and $y\in Y$, one has $xy=yx$;
    \item if $m\ge 2$, then $X$ is non-abelian and $Y$ is non-abelian.
\end{enumerate}
\end{lemma}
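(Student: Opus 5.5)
The plan is to verify everything by direct block-matrix multiplication, treating the embeddings
\[
\iota_1:\GL_m(\F_p)\to\GL_{2m}(\F_p),\ \bar X\mapsto \begin{pmatrix}\bar X&0\\0&I_m\end{pmatrix},
\qquad
\iota_2:\GL_m(\F_p)\to\GL_{2m}(\F_p),\ \bar Y\mapsto \begin{pmatrix}I_m&0\\0&\bar Y\end{pmatrix}
\]
as the organizing device.

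\textbf{Subgroup property.} We first note that $\iota_1(\bar X)$ is invertible, since its determinant is $\det\bar X\neq 0$. Block multiplication gives $\iota_1(\bar X)\iota_1(\bar X')=\iota_1(\bar X\bar X')$, so $\iota_1$ is a group homomorphism. Hence its image $X$ contains $I_{2m}=\iota_1(I_m)$, is closed under products, and is closed under inverses because $\iota_1(\bar X)^{-1}=\iota_1(\bar X^{-1})$. The identical argument for $\iota_2$ shows $Y\le\GL_{2m}(\F_p)$. Both maps are also injective, which we will use in part 2.

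\textbf{Part 1.} For $x=\iota_1(\bar X)$ and $y=\iota_2(\bar Y)$, block multiplication gives
\[
xy=\begin{pmatrix}\bar X&0\\0&\bar Y\end{pmatrix}=yx,
\]
so every element of $X$ commutes with every element of $Y$.

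\textbf{Part 2.} Since $\iota_1$ and $\iota_2$ are injective homomorphisms, $X\cong\GL_m(\F_p)\cong Y$. It therefore suffices to show that $\GL_m(\F_p)$ is non-abelian for $m\ge 2$. I would exhibit two explicit non-commuting elements valid for every prime $p$: the elementary matrices $I_m+E_{12}$ and $I_m+E_{21}$. Both have determinant $1$. Their products in the two orders differ in the top-left $2\times 2$ block:
\[
\begin{pmatrix}2&1\\1&1\end{pmatrix}
\quad\text{versus}\quad
\begin{pmatrix}1&1\\1&2\end{pmatrix}.
\]
These are different even when $p=2$, because their $(1,1)$ entries are $2$ and $1$, and $2\neq 1$ in every $\F_p$.

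\textbf{Expected obstacle.} None of these steps is hard. The only point needing care is choosing the witness in part 2 so that it works uniformly in $p$, including $p=2$. For example, a diagonal-versus-permutation pair would fail when $\F_p^\times$ is trivial, which is why I use the elementary matrices above.
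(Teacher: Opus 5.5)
Your proof is correct, and part~1 is exactly the paper's argument: the block identity $\mathrm{diag}(\bar X,I_m)\,\mathrm{diag}(I_m,\bar Y)=\mathrm{diag}(\bar X,\bar Y)=\mathrm{diag}(I_m,\bar Y)\,\mathrm{diag}(\bar X,I_m)$. The paper leaves the subgroup property and the non-abelian claim unproved, and you fill both in correctly: via the injective homomorphisms $\iota_1,\iota_2$, and via the witnesses $I_m+E_{12}$ and $I_m+E_{21}$, which work for every prime $p$, including $p=2$.
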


The cross-commutativity follows by block multiplication, since
\[
\mathrm{diag}(\bar X,I_m)\mathrm{diag}(I_m,\bar Y)
=\mathrm{diag}(\bar X,\bar Y)
=\mathrm{diag}(I_m,\bar Y)\mathrm{diag}(\bar X,I_m).
\]

This construction is simple and explicit: it guarantees elementwise commutativity between
the two subgroups while preserving non-commutativity inside each subgroup. On the other
hand, it also introduces a very visible direct-sum decomposition of the ambient space
\[
\mathbb{F}_p^{2m}=\mathbb{F}_p^m\oplus \mathbb{F}_p^m,
\]
which may leak structural information and therefore should be treated with caution in
cryptographic applications.

\subsection{Construction via tensor-product decomposition}
\label{subsec:tensor-product-commuting}

When the ambient dimension is a square, say
\[
d=n^2,
\]
another natural construction comes from a tensor-product decomposition of the ambient
space. Let
\[
V=\mathbb{F}_p^n,
\qquad
W=\mathbb{F}_p^n,
\qquad
U:=V\otimes W\cong \mathbb{F}_p^{n^2}.
\]
Inside $\GL(U)\cong \GL_{n^2}(\F_p)$, define
\[
X:=\{\,M\otimes I_n : M\in \GL_n(\F_p)\,\},
\qquad
Y:=\{\,I_n\otimes N : N\in \GL_n(\F_p)\,\}.
\]
By the mixed-product rule for Kronecker products, for all
$M,N\in\GL_n(\F_p)$ one has
\[
(M\otimes I_n)(I_n\otimes N)
=
M\otimes N
=
(I_n\otimes N)(M\otimes I_n).
\]
Thus every element of $X$ commutes with every element of $Y$. The following lemma records
the subgroup properties.

\begin{lemma}
\label{lem:tensor-product-commuting}
The sets $X$ and $Y$ defined above are subgroups of $\GL_{n^2}(\F_p)$. Moreover:
\begin{enumerate}
    \item for every $x\in X$ and $y\in Y$, one has $xy=yx$;
    \item if $n\ge 2$, then $X$ is non-abelian and $Y$ is non-abelian.
\end{enumerate}
\end{lemma}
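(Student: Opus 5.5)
The plan is to view $X$ and $Y$ as images of group homomorphisms and then read off all properties from the Kronecker mixed-product rule $(P\otimes Q)(R\otimes S)=PR\otimes QS$, which the paper already uses just before the statement. Define
\[
\iota_1:\GL_n(\F_p)\to \GL_{n^2}(\F_p),\ M\mapsto M\otimes I_n,
\qquad
\iota_2:\GL_n(\F_p)\to \GL_{n^2}(\F_p),\ N\mapsto I_n\otimes N.
\]

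First I will check that $\iota_1$ and $\iota_2$ are well defined, i.e.\ land in $\GL_{n^2}(\F_p)$. The mixed-product rule gives $(M\otimes I_n)(M^{-1}\otimes I_n)=I_n\otimes I_n=I_{n^2}$, so $M\otimes I_n$ is invertible with inverse $M^{-1}\otimes I_n$. Next, the same rule gives $(M\otimes I_n)(M'\otimes I_n)=MM'\otimes I_n$, so $\iota_1$ is a homomorphism. Hence $X=\iota_1(\GL_n(\F_p))$ is a subgroup, being the image of a homomorphism. The argument for $Y$ is symmetric.

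For item~1, the mixed-product computation displayed before the lemma already shows that $(M\otimes I_n)(I_n\otimes N)=M\otimes N=(I_n\otimes N)(M\otimes I_n)$, so every element of $X$ commutes with every element of $Y$.

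For item~2, the only step that needs any care is that $\iota_1$ and $\iota_2$ are injective, since otherwise non-commutativity in $\GL_n(\F_p)$ might not survive. Injectivity holds because $M\otimes I_n$ is block-diagonal in the $n\times n$ block structure with entries $M_{ij}I_n$, so $M\otimes I_n=M'\otimes I_n$ forces $M=M'$. Similarly, $I_n\otimes N=\mathrm{diag}(N,\dots,N)$, so $N$ is recovered from it. It therefore suffices that $\GL_n(\F_p)$ is non-abelian for $n\ge 2$. To see this, take the elementary matrices $P=I_n+E_{12}$ and $Q=I_n+E_{21}$: their product $PQ$ has $(1,1)$ entry $2$, whereas $QP$ has $(1,1)$ entry $1$, so $PQ\neq QP$. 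Because $\iota_1$ and $\iota_2$ are injective homomorphisms, $\iota_1(P)\iota_1(Q)\neq\iota_1(Q)\iota_1(P)$, and likewise for $\iota_2$. This shows that both $X$ and $Y$ are non-abelian.
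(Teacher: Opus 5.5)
Your proof is correct and follows the paper's route: the paper justifies this lemma only by the Kronecker mixed-product rule, which you also use for invertibility, the homomorphism property, and cross-commutativity, and you additionally supply the injectivity and non-abelian details that the paper leaves implicit. One small wording fix: $M\otimes I_n$ is not block-diagonal; it is the block matrix whose $(i,j)$ block is $M_{ij}I_n$, and that description is exactly what your injectivity argument actually uses.
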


The cross-commutativity follows from the mixed-product rule: $(M\otimes I_n)(I_n\otimes N)=M\otimes N=(I_n\otimes N)(M\otimes I_n)$.

This construction is attractive because it preserves substantial non-commutative freedom
inside each subgroup while making cross-commutativity automatic. However, it also exposes
the public tensor-product decomposition
\[
\mathbb{F}_p^{n^2}\cong \mathbb{F}_p^n\otimes \mathbb{F}_p^n,
\]
and, as we explain in Subsection~\ref{subsec:tensor-product-separability}, this separability makes
the resulting shared-key computation easy.
\section{Construction Pitfalls of Commuting Subgroups in $\GL_d(\F_p)$}
\label{sec:pitfalls-commuting}

In this section, we analyze the constructions introduced in
Section~\ref{sec:commuting-subgroup-instantiation} and explain the additional structural
weaknesses that they expose. The generic linear decomposition attack in
Section~\ref{sec:generic-lda} already breaks the public finite-generator framework
independently of these choices. The analyses below are nevertheless informative because
they identify more explicit construction-specific leakage and motivate the recovery
procedures evaluated on toy instances in Section~\ref{sec:empirical-illustration}.

From the attacker's perspective, the public transcript of the tensor Ko--Lee scheme consists
of
\[
T,
\qquad
T_A:=(A,B,C)\cdot T,
\qquad
T_B:=(D,E,F)\cdot T,
\]
where $(A,B,C)$ is Alice's secret action and $(D,E,F)$ is Bob's ephemeral action. The
attacker does not need to recover all six secret matrices. To break the session, it is
enough to compute the shared tensor
\[
S=(D,E,F)\cdot T_A=(A,B,C)\cdot T_B=(AD,BE,CF)\cdot T,
\]
or enough to recover either secret action and then apply it to the other public tensor.
Thus the concrete cryptanalytic task is a shared-key recovery problem, which may be reduced
either to tensor-action inversion for $T_A$ or $T_B$, or to a direct computation of the
combined action on $T$.

The analyses below exploit visible algebraic structure in the candidate commuting
subgroups. Depending on the construction and the stated hypotheses, this structure may
enable action recovery, reduce the shared-key task to an auxiliary problem, or permit direct
recovery of the shared tensor.



Table~\ref{tab:pitfall-taxonomy} summarizes the three constructions from
Section~\ref{sec:commuting-subgroup-instantiation}, the visible structure each one exposes, the corresponding attack mechanism, and the resulting design lesson.

\begin{table}[t]
\centering
\small
\caption{Taxonomy of the failure modes of the commuting-subgroup constructions introduced in Section~\ref{sec:commuting-subgroup-instantiation}.}
\label{tab:pitfall-taxonomy}
\renewcommand{\arraystretch}{1.18}
\begin{tabular}{@{}>{\raggedright\arraybackslash}p{0.24\textwidth}>{\raggedright\arraybackslash}p{0.19\textwidth}>{\raggedright\arraybackslash}p{0.24\textwidth}>{\raggedright\arraybackslash}p{0.21\textwidth}@{}}
\hline
Construction & Visible structure & Attack mechanism & Lesson \\
\hline
Field-extension construction (Subsection~\ref{subsec:commuting-field-extension})
& simultaneous diagonalization of extension field
& conditional coordinate-wise action recovery under sufficient tensor support
& avoid efficiently diagonalizable commuting subgroups \\
Block-diagonal construction (Subsection~\ref{subsec:block-diagonal-commuting})
& large fixed or invariant subspace
& reduction to a lower-dimensional mixed-product problem
& avoid large public invariant subspaces \\
Tensor-product construction (Subsection~\ref{subsec:tensor-product-commuting})
& public coarse/fine tensor-factor decomposition
& flatten the action into an explicit left--right matrix sandwich and compute the shared tensor directly
& avoid public tensor-factor decompositions and separable actions \\
\hline
\end{tabular}
\renewcommand{\arraystretch}{1}
\end{table}

\subsection{Conditional Recovery for Simultaneously Diagonalizable Subgroups}
\label{subsec:pitfall-simul-conj}

One natural way to choose two commuting subgroups $\mathcal{X}$ and $\mathcal{Y}$ is
to take them equal to the same abelian subgroup, say $\mathcal{H}$. A known simultaneous
diagonalization exposes coordinate-wise multiplicative structure; when the transformed
tensor has sufficient support, tensor-action inversion reduces to a low-degree
reconstruction problem. We first record the formal definition.

\begin{definition}[Simultaneous diagonalizability]
Let $\mathcal{H}\le \GL_d(\F_p)$. We say that subgroup $\mathcal{H}$ is
\emph{simultaneously diagonalizable over $\mathbb{F}_{p^m}$} if there exist an integer
$m\ge 1$ and a matrix $P\in \GL_d(\mathbb{F}_{p^m})$ such that for every
$h\in \mathcal{H}$,
\[
P^{-1} h P \ \text{is diagonal in}\ \mathrm{Mat}_d(\mathbb{F}_{p^m}).
\]
\end{definition}

We allow the conjugating matrix $P$ to lie in $\GL_d(\mathbb{F}_{p^m})$ (for some
$m\ge 1$) rather than restricting to $\GL_d(\F_p)$ because many natural
matrix families over $\mathbb{F}_p$ admit a diagonal form only after passing to an
extension field. Concretely, a matrix in $\GL_d(\F_p)$ may have an
irreducible characteristic polynomial over $\mathbb{F}_p$, and hence no eigenvalues or
eigenbasis over the base field. After passing to a finite extension $\mathbb{F}_{p^m}$
where the polynomial splits, diagonalization may become possible. This extension-of-scalars
viewpoint is standard in representation theory and is essential for capturing common
cryptographic constructions such as field-multiplication subgroups (Singer cycles), which
are defined over $\mathbb{F}_p$ but diagonalize over $\mathbb{F}_{p^d}$.

We now record the basic structural consequence that such a subgroup is abelian, and then
state a conditional action-recovery result. The proofs are provided in
Appendix~\ref{proof:simul-commute} and Appendix~\ref{proof:diagonalizable-easy}.

\begin{lemma}[Simultaneous implies commutativity]
\label{lem:simul-commute}
Let $\mathcal{H}\subseteq \GL_d(\F_p)$ be a set of matrices. Assume there
exist a field extension $\mathbb{F}_{p^m}$ and a matrix $P\in \GL_d(\mathbb{F}_{p^m})$
such that for every $h\in \mathcal{H}$, the conjugate $P^{-1}hP$ lies in a commuting subset
$\mathcal{K}\subseteq \GL_d(\mathbb{F}_{p^m})$. Then $\mathcal{H}$ is commuting; namely, for all
$h_1,h_2\in \mathcal{H}$ we have $h_1h_2=h_2h_1$.
\end{lemma}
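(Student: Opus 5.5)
The plan is to transport commutativity back along conjugation by $P$. Fix $h_1,h_2\in\mathcal{H}$ and set $k_i:=P^{-1}h_iP\in\mathcal{K}$ for $i=1,2$. Because $\mathcal{K}$ is a commuting subset, $k_1k_2=k_2k_1$ holds in $\mathrm{Mat}_d(\mathbb{F}_{p^m})$. I will then express $h_i=Pk_iP^{-1}$ and compute directly:
\[
h_1h_2 = Pk_1P^{-1}Pk_2P^{-1} = Pk_1k_2P^{-1} = Pk_2k_1P^{-1} = Pk_2P^{-1}Pk_1P^{-1} = h_2h_1 .
\]
Conjugation by $P$ is an algebra automorphism of $\mathrm{Mat}_d(\mathbb{F}_{p^m})$, so it preserves and reflects commutation.

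One bookkeeping point needs attention. The matrices $h_i$ have entries in $\F_p$, while $P$ and the $k_i$ live over $\mathbb{F}_{p^m}$. I will regard $\F_p\subseteq\mathbb{F}_{p^m}$ via the standard embedding, which induces an injective ring homomorphism $\mathrm{Mat}_d(\F_p)\hookrightarrow\mathrm{Mat}_d(\mathbb{F}_{p^m})$. The identity $h_1h_2=h_2h_1$ therefore holds in the larger matrix ring. Injectivity then gives the same equality in $\mathrm{Mat}_d(\F_p)$, and hence in $\GL_d(\F_p)$.

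No step is a real obstacle; the only care needed is this extension-of-scalars point. As a remark, the diagonal case in the preceding definition is the special case where $\mathcal{K}$ is the set of invertible diagonal matrices in $\GL_d(\mathbb{F}_{p^m})$, which is commuting. This yields the stated consequence that simultaneously diagonalizable subgroups are abelian.
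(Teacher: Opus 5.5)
Your proof is correct and is the same as the paper's: you set $k_i=P^{-1}h_iP\in\mathcal{K}$, use $k_1k_2=k_2k_1$, and conjugate back by $P$. Your remark about embedding $\mathrm{Mat}_d(\F_p)$ into $\mathrm{Mat}_d(\mathbb{F}_{p^m})$ makes explicit a point the paper leaves implicit.
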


\begin{proposition}[Diagonalizable $\mathcal{H}$ yields efficient action recovery under sufficient support]
\label{prop:diagonalizable-easy}
Assume that $\mathcal{H}$ is simultaneously diagonalizable over
$\mathbb{F}_{p^m}$ with a known conjugator
$P\in\GL_d(\mathbb{F}_{p^m})$. Let
\[
\widetilde T := (P^{-1},P^{-1},P^{-1})\cdot T.
\]
Suppose that $\widetilde T$ has sufficient support for coordinate-wise recovery; for
example, there exists a pivot $(i_0,j_0,k_0)$ such that
\[
\widetilde T_{i_0j_0k_0}\neq 0,
\]
and, for every $i,j,k$, the entries
\[
\widetilde T_{ij_0k_0},\qquad
\widetilde T_{i_0jk_0},\qquad
\widetilde T_{i_0j_0k}
\]
needed in the ratio reconstruction are nonzero. Then an action
$(A,B,C)\in\mathcal{H}^3$ can be recovered from $T$ and $(A,B,C)\cdot T$ in
polynomial time, up to the natural scalar gauge freedom.
\end{proposition}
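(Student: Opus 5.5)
The plan is to conjugate the whole problem into the diagonal basis given by $P$, so that the tensor action becomes a coordinate-wise rescaling, and then to read off the diagonal entries by entrywise ratios anchored at the pivot.

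\textbf{Step 1 (diagonal coordinates).} Write $A=P\Lambda_AP^{-1}$, $B=P\Lambda_BP^{-1}$, $C=P\Lambda_CP^{-1}$ with $\Lambda_A=\mathrm{diag}(a_1,\dots,a_d)$ and similarly $b_j$, $c_k$. By Lemma~\ref{lem:tensor-action-composition},
\[
(P^{-1},P^{-1},P^{-1})\cdot\bigl((A,B,C)\cdot T\bigr)=(\Lambda_A,\Lambda_B,\Lambda_C)\cdot\widetilde T .
\]
Both $\widetilde T$ and this transformed public tensor $\widetilde T'$ are computable in polynomial time over $\mathbb{F}_{p^m}$, since $P$ is known. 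By \eqref{eq:tensor-action}, the transformed tensor satisfies entrywise
\[
\widetilde T'_{ijk}=a_ib_jc_k\,\widetilde T_{ijk}.
\]
Hence on the support of $\widetilde T$ we know every product $\mu_{ijk}:=a_ib_jc_k=\widetilde T'_{ijk}/\widetilde T_{ijk}$.

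\textbf{Step 2 (ratio reconstruction with gauge fixing).} The gauge freedom is the two-parameter torus $(a,b,c)\mapsto(\lambda a,\nu b,(\lambda\nu)^{-1}c)$, which leaves every $\mu_{ijk}$ and the action unchanged. Fix the gauge by declaring $a_{i_0}=b_{j_0}=1$; then $c_{k}=\mu_{i_0j_0k}$. Next, $a_i=\mu_{ij_0k_0}/c_{k_0}$ and $b_j=\mu_{i_0jk_0}/c_{k_0}$. These divisions are well defined because the hypothesis guarantees the needed entries are nonzero and $c_{k_0}=\mu_{i_0j_0k_0}\neq0$ by the pivot assumption. The resulting triple $(\Lambda_A,\Lambda_B,\Lambda_C)$ reproduces $\widetilde T'$ on the whole support: on the three pivot lines this holds by construction. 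For a general $(i,j,k)$ we need consistency with the true rank-one pattern $a_ib_jc_k$, and this holds because the true values satisfy the same equations up to the torus action. Concretely, the recovered values are $\lambda a_i$, $\nu b_j$, $(\lambda\nu)^{-1}c_k$ with $\lambda=a_{i_0}^{-1}$ and $\nu=b_{j_0}^{-1}$.

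\textbf{Step 3 (pull back).} Set $A^*=P\Lambda_A P^{-1}$ and similarly $B^*$, $C^*$. Then $(A^*,B^*,C^*)\cdot T=(A,B,C)\cdot T$, and $(A^*,B^*,C^*)$ equals $(\lambda A,\nu B,(\lambda\nu)^{-1}C)$. This is exactly the scalar gauge class. Since the action on any other tensor, in particular on $T_B$, is invariant under this gauge, the recovered action is as good as the secret one. The total cost is $O(d^3)$ field operations plus the polynomial cost of the changes of basis.

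\textbf{Main obstacle.} The delicate point is whether the recovered matrices lie in $\mathcal{H}$, or even in $\GL_d(\F_p)$, rather than only in $P\,\mathrm{diag}\,P^{-1}$ over $\mathbb{F}_{p^m}$. My first attempt is to argue that the true scalings $\lambda,\nu$ are scalars of $\mathbb{F}_{p^m}$, so the recovered triple differs from a genuine element of $\mathcal{H}^3$ only by central scalars. I would then state the recovery modulo this gauge, which is what the proposition asserts. If membership in $\mathcal{H}$ is required, I would additionally solve for $\lambda$ with $\lambda A^*\in\mathcal{H}$ using a known generator, but I would treat that as optional.
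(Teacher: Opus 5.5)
Your proposal is correct and follows essentially the same route as the paper. You pass to the $P$-diagonal basis to get $\widetilde T_{A,ijk}=a_ib_jc_k\widetilde T_{ijk}$, and your gauge choice $a_{i_0}=b_{j_0}=1$ gives exactly the paper's $\widehat a_i=R_{ij_0k_0}/R_{i_0j_0k_0}$, $\widehat b_j=R_{i_0jk_0}/R_{i_0j_0k_0}$, $\widehat c_k=R_{i_0j_0k}$, after which you conjugate back by $P$. Your further remarks are accurate and are left implicit in the paper's phrase ``up to the natural scalar gauge freedom'': the recovered triple equals $(\lambda A,\nu B,(\lambda\nu)^{-1}C)$ exactly, so it acts identically on every tensor including $T_B$, and it need not lie in $\mathcal{H}$ or $\GL_d(\F_p)$.
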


\begin{corollary}[Conditional action recovery for the field-extension construction]
\label{cor:field-extension-easy}
Let $\mathcal{H}=\Phi(\mathbb{K}^{\times})\le \GL_d(\F_p)$ be the
field-multiplication subgroup from Subsection~\ref{subsec:commuting-field-extension},
and let $P$ be a known simultaneous diagonalizer over a splitting field. If
$\widetilde T=(P^{-1},P^{-1},P^{-1})\cdot T$ satisfies the support condition of
Proposition~\ref{prop:diagonalizable-easy}, then any action
$(A,B,C)\in\mathcal{H}^3$ can be recovered from $T$ and $(A,B,C)\cdot T$ in
polynomial time, up to the natural scalar gauge freedom.
\end{corollary}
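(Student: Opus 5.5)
The plan is to reduce Corollary~\ref{cor:field-extension-easy} directly to Proposition~\ref{prop:diagonalizable-easy}. The only thing to check is that $\mathcal{H}=\Phi(\mathbb{K}^\times)$ is simultaneously diagonalizable over a splitting field, with a conjugator $P$ that is known or computable. Once this holds, the support hypothesis on $\widetilde T$ is assumed verbatim in the corollary, and the conclusion, including the scalar gauge freedom, is inherited unchanged.

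\textbf{Step 1: diagonalizability of $\mathcal{H}$.} Use the concrete realization $\mathcal{H}\subseteq\F_p[C_f]^\times$, where $f$ is irreducible of degree $d$. Since $f$ is separable over the finite field $\F_p$, it splits over $\mathbb{F}_{p^d}$ with the $d$ distinct roots $\theta,\theta^p,\dots,\theta^{p^{d-1}}$. The companion matrix $C_f$ therefore has $d$ distinct eigenvalues in $\mathbb{F}_{p^d}$, so it is diagonalizable there. Take $P\in\GL_d(\mathbb{F}_{p^d})$ whose columns are eigenvectors of $C_f$; explicitly, these are transposed Vandermonde-type vectors in the powers of each root. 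Every element of $\mathcal{H}$ has the form $g(C_f)$ for a polynomial $g\in\F_p[t]$. Hence
\[
P^{-1}g(C_f)P=\mathrm{diag}\bigl(g(\theta),g(\theta^p),\dots,g(\theta^{p^{d-1}})\bigr),
\]
so the single matrix $P$ diagonalizes all of $\mathcal{H}$ with $m=d$. For an abstract basis $\mathcal{B}$, the map $\Phi$ differs from this realization by a fixed change of basis $Q\in\GL_d(\F_p)$, and $QP$ then serves as the conjugator.

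\textbf{Step 2: efficiency and invoking the proposition.} The corollary says ``let $P$ be a known simultaneous diagonalizer'', so one may simply take $P$ as given. For completeness, I would also note that $P$ is computable in time polynomial in $d$ and $\log p$. This requires arithmetic in $\mathbb{F}_{p^d}=\F_p[t]/(f)$, in which $\theta=t$ is a root of $f$, together with Frobenius powers $\theta^{p^i}$ computed by repeated squaring. The entries of $P$ then lie in a field of size $p^d$, whose elements have $O(d\log p)$-bit representations. With $P$ in hand, form $\widetilde T$ and apply Proposition~\ref{prop:diagonalizable-easy} to the pair $T$, $(A,B,C)\cdot T$. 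This recovers $(A,B,C)$ up to gauge.

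\textbf{Main obstacle.} The proposition recovers diagonal matrices over $\mathbb{F}_{p^d}$. After conjugating back by $P$, one must check that the result lies in $\mathcal{H}$, or at least in $\GL_d(\F_p)$, and not merely in the larger diagonal torus. I would handle this as follows. The recovered diagonal entries of, say, $P^{-1}AP$ are $\lambda\cdot(\alpha,\alpha^p,\dots,\alpha^{p^{d-1}})$ for an unknown gauge scalar $\lambda$. I would fix the gauge by choosing $\lambda$ so that the entries form a Frobenius orbit; for instance, normalize so that the $i$-th entry equals the $p$-th power of the $(i-1)$-th. Alternatively, I would accept any gauge representative and note that $(\lambda A,\mu B,\nu C)$ with $\lambda\mu\nu=1$ acts identically on all tensors. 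That suffices for decryption and is precisely the stated ``up to gauge''.
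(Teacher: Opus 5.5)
Your proposal is correct and follows essentially the same route as the paper. You diagonalize $C_f$ over $\mathbb{F}_{p^d}$ using the $d$ distinct Frobenius-conjugate roots of $f$, observe that every element of $\mathcal{H}$ is a polynomial in $C_f$ and so shares that efficiently computable eigenbasis, and then invoke Proposition~\ref{prop:diagonalizable-easy}. Your extra gauge discussion goes beyond what the paper checks and is sound: accepting any gauge representative already suffices, and for an $\F_p$-rational representative in $\mathcal{H}$ it is simpler to divide each reconstructed $\lambda A$ by one of its nonzero entries (adjusting $C$ accordingly) than to extract a $(p-1)$-th root to force a Frobenius orbit.
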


This follows because, in the irreducible-polynomial realization $\mathcal{H}=\mathbb{F}_p[C_f]^\times$, the companion matrix $C_f$ becomes diagonalizable over $\mathbb{F}_{p^d}$ once $f$ is split there. Every element of $\mathcal{H}$ is a polynomial in $C_f$ and is therefore diagonalized by the same conjugator. Moreover, that conjugator can be computed efficiently from the roots of $f$ and the associated eigenbasis; see, for example, \cite[Chapter~3]{LidlNiederreiter1997}.

If one chooses $\mathcal{X}=\mathcal{Y}=\mathcal{H}$, an efficiently computable
simultaneous diagonalization exposes coordinate-wise multiplicative structure. Under the
support condition in Proposition~\ref{prop:diagonalizable-easy}, this structure gives
polynomial-time action recovery. We do not claim that the tensor sampler always produces
the required support after the basis change. Independently of this conditional result, the
public finite-generator field-extension construction is broken by the generic linear
decomposition attack in Section~\ref{sec:generic-lda}.

\subsection{Reduction of fixed subspace}

The discussion above concerns cases where tensor inversion itself may become easy. However, solving the full tensor-action inversion problem is not necessary for breaking
the protocol: it suffices to recover the shared tensor used to derive the session key.

We now define an auxiliary problem and explain how the shared key computation for the commuting construction in Subsection~\ref{subsec:block-diagonal-commuting} reduces to it.

\begin{definition}[Auxiliary mixed-product problem.]
\label{def:aux-mixed-product-problem}
Given matrices
\[
S_1,\dots,S_m\in \mathbb{F}_p^{m\times m},
\]
together with
\[
U_i=\sum_{k=1}^m \bar C_{ik}\,\bar A S_k,\qquad
V_i=S_i\bar E^\top
\qquad (1\le i\le m),
\]
we need to compute
\[
W_i=\sum_{k=1}^m \bar C_{ik}\,\bar A S_k \bar E^\top
\qquad (1\le i\le m).
\]
\end{definition}

\begin{proposition}
\label{prop:block-reduction}
Solving the Ko--Lee tensor scheme in Section~\ref{sec:tensor-ko-lee} under the construction in
Subsection~\ref{subsec:block-diagonal-commuting} can be reduced to solving the auxiliary mixed
product problem in Definition~\ref{def:aux-mixed-product-problem}.
\end{proposition}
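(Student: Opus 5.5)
The plan is to write every matrix in block form with respect to the splitting $\F_p^{2m}=\F_p^m\oplus\F_p^m$ and to split the shared tensor $S$ into its $8$ sub-blocks. Each sub-block will turn out to be either directly visible in the transcript, or an instance of the auxiliary mixed-product problem after relabelling.

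\textbf{Block form of the actions.} By Lemma~\ref{lem:tensor-action-composition}, $S=(AD,BE,CF)\cdot T$. Under the construction of Subsection~\ref{subsec:block-diagonal-commuting},
\[
AD=\mathrm{diag}(\bar A,\bar D),\qquad BE=\mathrm{diag}(\bar B,\bar E),\qquad CF=\mathrm{diag}(\bar C,\bar F).
\]
Index the halves of each mode by $1$ (first $m$ coordinates) and $2$ (last $m$ coordinates). For $(\alpha,\beta,\gamma)\in\{1,2\}^3$ let $T^{(\alpha\beta\gamma)}\in\F_p^{m\times m\times m}$ be the corresponding sub-block of $T$. Since all matrices are block diagonal, the action in \eqref{eq:tensor-action} does not mix blocks. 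Hence
\[
S^{(\alpha\beta\gamma)}=(M_\alpha,N_\beta,O_\gamma)\cdot T^{(\alpha\beta\gamma)},
\]
with $M_1=\bar A$, $M_2=\bar D$, $N_1=\bar B$, $N_2=\bar E$, $O_1=\bar C$, $O_2=\bar F$. In the same way,
\begin{itemize}[leftmargin=*]
\item $T_A^{(\alpha\beta\gamma)}$ is $T^{(\alpha\beta\gamma)}$ acted on by Alice's blocks in the modes carrying index $1$ and by $I_m$ in the others;
\item $T_B^{(\alpha\beta\gamma)}$ is $T^{(\alpha\beta\gamma)}$ acted on by Bob's blocks in the modes carrying index $2$ and by $I_m$ in the others.
\end{itemize}

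\textbf{Pure blocks.} If all three indices equal $1$, then $S^{(111)}=T_A^{(111)}$. If all equal $2$, then $S^{(222)}=T_B^{(222)}$. Both are read off the transcript.

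\textbf{A representative mixed block.} Take $(\alpha,\beta,\gamma)=(1,2,1)$. Let $S_1,\dots,S_m$ be the mode-3 slices of $T^{(121)}$. By \eqref{eq:slice-action}, the $i$-th slice of $T_A^{(121)}$ is $\sum_k \bar C_{ik}\bar A S_k$, which is exactly $U_i$. The $i$-th slice of $T_B^{(121)}$ is $S_i\bar E^\top$, which is exactly $V_i$. The target $S^{(121)}$ has slices $\sum_k\bar C_{ik}\bar A S_k\bar E^\top=W_i$. So one call to the oracle for Definition~\ref{def:aux-mixed-product-problem} returns this block.

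\textbf{Remaining mixed blocks.} Every other mixed block has one mode carrying one party's matrix and two modes carrying the other party's. I would reduce each to the same form with two relabellings.
\begin{itemize}[leftmargin=*]
\item \emph{Permuting modes.} Permute the modes of $T$, and consistently of $T_A,T_B$, so that the singleton mode sits in position $2$. The action of $\GL_m^3$ is equivariant under such permutations, with matrix slots permuted accordingly. Transposing slices swaps modes $1$ and $2$.
\item \emph{Swapping parties.} If Bob is the party acting on two modes, exchange the roles of $T_A$ and $T_B$. The auxiliary problem only fixes the shape of the inputs, not which party owns which matrix.
\end{itemize}
After relabelling, the block is an auxiliary instance with some matrices $\bar A',\bar C',\bar E'$ in place of $\bar A,\bar C,\bar E$. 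The answer is mapped back by the inverse permutation.

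\textbf{Conclusion.} Assembling all $8$ blocks gives $S$, and hence $\Hash(S)$ and $m=c\oplus\Hash(S)$. The reduction uses at most $6$ oracle calls and polynomial-time bookkeeping.

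The main obstacle is the bookkeeping in the relabelling step. For each of the six mixed patterns, the inputs must be shown to be in exactly the form of Definition~\ref{def:aux-mixed-product-problem}: $m$ slices of size $m\times m$, a left-multiplying matrix, a mode-3 mixing matrix, and a right-multiplying transpose on the singleton mode. I would first check the pattern $(1,2,1)$ carefully, as above. Then I would state a single mode-permutation lemma covering the remaining patterns, rather than verifying each case separately.
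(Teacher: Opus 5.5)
Your proposal is correct and follows essentially the paper's route. You split along $\F_p^m\oplus\F_p^m$, read the pure blocks off $T_A$ and $T_B$, recognise the $(1,2,1)$ block as a literal instance of Definition~\ref{def:aux-mixed-product-problem}, and reduce the remaining blocks to it by transposition and relabelling. Your eight-block version is in fact more complete than the paper's written proof. That proof only treats the first $m$ output slices (the $\gamma=1$ blocks) and handles the $(2,2,1)$ block by an imprecise block-swap argument. It never addresses the last $m$ slices, which involve $\bar F$. Your mode-permutation plus party-swap step covers all six mixed patterns uniformly.
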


Proposition~\ref{prop:block-reduction} shows that the block-diagonal construction exposes
an additional lower-dimensional mixed-product structure. Thus the tensor shared-key
problem is reduced to a more structured matrix problem. We do not claim that the
auxiliary mixed-product problem is polynomial-time solvable in complete generality.
Nevertheless, the concrete construction is already insecure under the generic linear
decomposition attack of Section~\ref{sec:generic-lda}, while
Proposition~\ref{prop:block-reduction} identifies an additional construction-specific
source of algebraic leakage. We return in Section~\ref{subsec:influence_scaled-block} to
the more nuanced question of what happens when the relevant fixed or scaled subspace is
small.

\subsection{Tensor-product separability}
\label{subsec:tensor-product-separability}

The previous discussion shows that public direct-sum structure can be dangerous. A different
but equally problematic source of leakage is a public tensor-product decomposition. Assume
that
\[
d=n^2,
\]
and consider the construction from Subsection~\ref{subsec:tensor-product-commuting}. Thus Alice
samples
\[
A=M_1\otimes I_n,
\qquad
B=M_2\otimes I_n,
\qquad
C=M_3\otimes I_n
\]
from $X$, while Bob samples
\[
D=I_n\otimes N_1,
\qquad
E=I_n\otimes N_2,
\qquad
F=I_n\otimes N_3
\]
from $Y$, where $M_1,M_2,M_3,N_1,N_2,N_3\in \GL_n(\F_p)$.

\begin{proposition}[Coarse--fine flattening attack]
\label{prop:coarse-fine-flattening-attack}
Consider the tensor-action Ko--Lee framework in dimension $d=n^2$ under the tensor-product
construction of Subsection~\ref{subsec:tensor-product-commuting}. Given the public tensors
\[
T,
\qquad
T_A:=(A,B,C)\cdot T,
\qquad
T_B:=(D,E,F)\cdot T,
\]
one can compute the shared tensor
\[
T_{AB}:=(A,B,C)\cdot T_B
=
(D,E,F)\cdot T_A
\]
in polynomial time. Hence this tensor-product construction is not suitable as a secure
instantiation of the tensor-action Ko--Lee framework.
\end{proposition}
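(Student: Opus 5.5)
The plan is to reshape the cubic tensor of side $d=n^2$ as a $6$-way tensor in which Alice's and Bob's actions touch disjoint index sets, and then flatten it so that the whole transcript becomes a matrix sandwich $\mathcal M X\mathcal N^\top$; the shared tensor can then be read off from a single linear system.

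\emph{Step 1 (reindexing).} Write each index $a\in[n^2]$ as a pair $a=(\alpha,\beta)\in[n]\times[n]$, with $\alpha$ the coarse and $\beta$ the fine coordinate, following the Kronecker convention $(M\otimes N)_{(\alpha,\beta),(\alpha',\beta')}=M_{\alpha\alpha'}N_{\beta\beta'}$. Then $T$ becomes a $6$-way array $T_{\alpha_1\beta_1\alpha_2\beta_2\alpha_3\beta_3}$. Expanding Equation~\eqref{eq:tensor-action} with $A=M_1\otimes I_n$, $B=M_2\otimes I_n$, $C=M_3\otimes I_n$ shows that $(A,B,C)$ acts by $M_1,M_2,M_3$ on the coarse indices $\alpha_1,\alpha_2,\alpha_3$ and trivially on the fine ones. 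Symmetrically, $(D,E,F)$ acts by $N_1,N_2,N_3$ on $\beta_1,\beta_2,\beta_3$ only. This is a routine check on entries.

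\emph{Step 2 (flattening).} Define the matrix $X\in\F_p^{n^3\times n^3}$ with rows indexed by $(\alpha_1,\alpha_2,\alpha_3)$ and columns by $(\beta_1,\beta_2,\beta_3)$. Set
\[
\mathcal M:=M_1\otimes M_2\otimes M_3,\qquad \mathcal N:=N_1\otimes N_2\otimes N_3 .
\]
Step 1 then translates into the following flattened identities for $X_A$, $X_B$ and the flattened shared tensor $X_{AB}$:
\[
X_A=\mathcal M X,\qquad X_B=X\mathcal N^\top,\qquad X_{AB}=\mathcal M X\mathcal N^\top .
\]
The last identity is consistent with Lemma~\ref{lem:tensor-action-composition}, since left and right multiplications commute.

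\emph{Step 3 (linear recovery).} The attacker knows $X$ and $X_A$, and solves the linear system $ZX=X_A$ for an unknown $Z\in\F_p^{n^3\times n^3}$. This system is consistent because $Z=\mathcal M$ is a solution. Gaussian elimination returns some solution $Z$ in time polynomial in $n^6=d^3$. We do not need $Z$ to be invertible or of Kronecker form. For any such $Z$,
\[
ZX_B=ZX\mathcal N^\top=X_A\mathcal N^\top=\mathcal M X\mathcal N^\top=X_{AB}.
\]
Unflattening $ZX_B$ therefore yields $T_{AB}$. When $X$ happens to be invertible this simply reads $T_{AB}=X_AX^{-1}X_B$.

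The main obstacle is the index bookkeeping in Steps 1--2. I must verify that the Kronecker convention really places $M_i$ on the coarse slot and $N_i$ on the fine slot, consistently across all three modes. I must also check that the mode-3 matrix $C$ in the slice form \eqref{eq:slice-action} enters as $C$ and not $C^\top$, so that $\mathcal M$ and $\mathcal N$ appear as stated. I would check this directly from the entrywise formula \eqref{eq:tensor-action}, which treats all three modes symmetrically. The remaining steps are elementary linear algebra of cost $O(d^{9})$ field operations.
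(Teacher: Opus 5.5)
Your proposal is correct and is essentially the paper's proof: the same coarse--fine flattening gives $X_A=\mathcal M X$, $X_B=X\mathcal N^\top$ and $X_{AB}=\mathcal M X\mathcal N^\top$, which are exactly the paper's $P_A=LP$, $P_B=PR^\top$ and $\mathsf{Flat}(T_{AB})=LPR^\top$. The only difference is cosmetic and arises when $X$ is singular: the paper constructs an inner generalized inverse $G$ with $PGP=P$ and outputs $P_AGP_B$, whereas you take any solution $Z$ of $ZX=X_A$ (for instance $Z=X_AG$), which is the same argument without building $G$ explicitly.
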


The proof is given in Appendix~\ref{proof:coarse-fine-flattening-attack}. The key point is
that the public decomposition $\mathbb{F}_p^{n^2}\cong \mathbb{F}_p^n\otimes \mathbb{F}_p^n$
induces a coarse--fine flattening under which Alice's action becomes left multiplication by
$M_1\otimes M_2\otimes M_3$, while Bob's action becomes right multiplication by
$(N_1\otimes N_2\otimes N_3)^\top$. The shared tensor is therefore reduced to a matrix
sandwich that can be computed directly from the public transcript.

\begin{remark}[Interpretation]
\label{rem:tensor-product-separability-interpretation}
The vulnerability here is not the presence of a fixed subspace, but the publicly visible
separation between coarse and fine indices. Once that decomposition is known, the tensor
action collapses to an explicit left--right matrix action after flattening. In this sense,
tensor-product separability exposes an explicit linear-algebraic recovery procedure and is
therefore unsuitable for the proposed framework.
\end{remark}
\section{Empirical Illustration of Construction-Specific Recovery}
\label{sec:empirical-illustration}

We empirically illustrate the construction-specific recovery procedures motivated by
Section~\ref{sec:pitfalls-commuting}
on the three commuting-subgroup constructions from
Section~\ref{sec:commuting-subgroup-instantiation}. In each experiment, the public tensor
$T\leftarrow \SamTen(p,d)$ is sampled using the heuristic sampler from
Section~\ref{sec:strong-key-sampling}; the two commuting subgroups are instantiated as in
the corresponding field-extension, block-diagonal, or tensor-product construction; and
secret elements are sampled from these subgroups to form the public tensors
$T_A=(A,B,C)\cdot T$ and $T_B=(D,E,F)\cdot T$ together with the shared tensor
$S=(D,E,F)\cdot T_A=(A,B,C)\cdot T_B$. The recovery routines are then given only the
public transcript $(T,T_A,T_B)$ and use the structural information exposed by each
construction to compute the shared tensor. Thus these experiments should be read only as
toy-scale demonstrations of the attacks, not as a proof of asymptotic performance or as a
security claim about subgroup choices. In particular, the block-diagonal experiment tests
instances for which the implementation can solve the auxiliary mixed-product problem; it
does not establish a polynomial-time solution to that problem in complete generality. For
the dimension-sweep plots below, we use the
visualization driver with fixed prime $p=101$ and random seed $2026$, and report the
median recovery time over $10$ independent trials for each parameter value. The swept
parameters are $d\in\{3,\dots,12\}$ for the field-extension construction,
$m\in\{2,\dots,11\}$ for the block-diagonal construction with ambient dimension
$d=2m$, and $n\in\{2,\dots,7\}$ for the tensor-product construction with ambient
dimension $d=n^2$; the block-diagonal sampler allows up to $100$ resampling attempts to
obtain the required invertible blocks in the initial tensor.

\begin{figure}[H]
\centering
\begin{minipage}{0.32\textwidth}
\centering
\includegraphics[width=\linewidth]{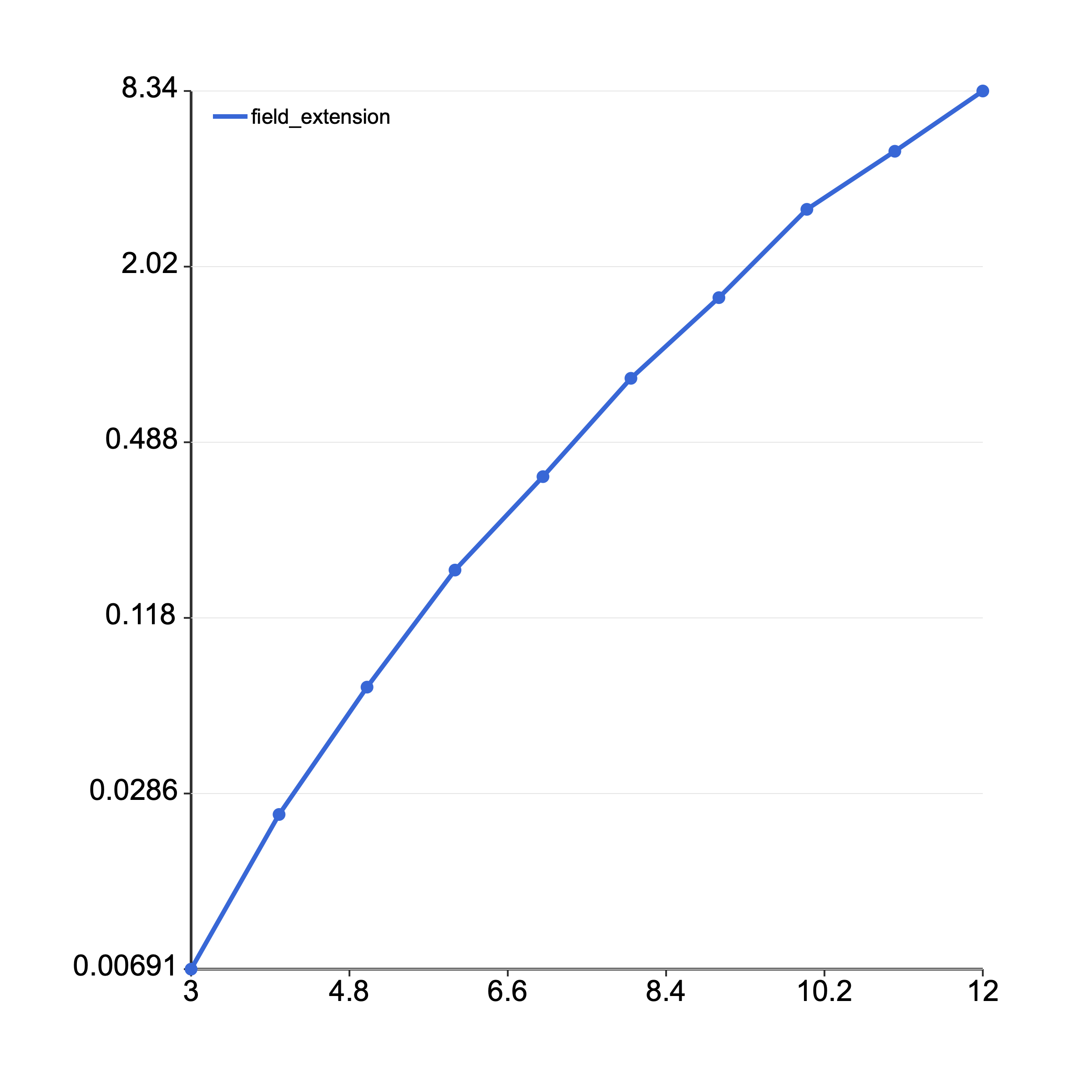}
{\scriptsize (a) Field extension, parameter $d$.}
\end{minipage}\hfill
\begin{minipage}{0.32\textwidth}
\centering
\includegraphics[width=\linewidth]{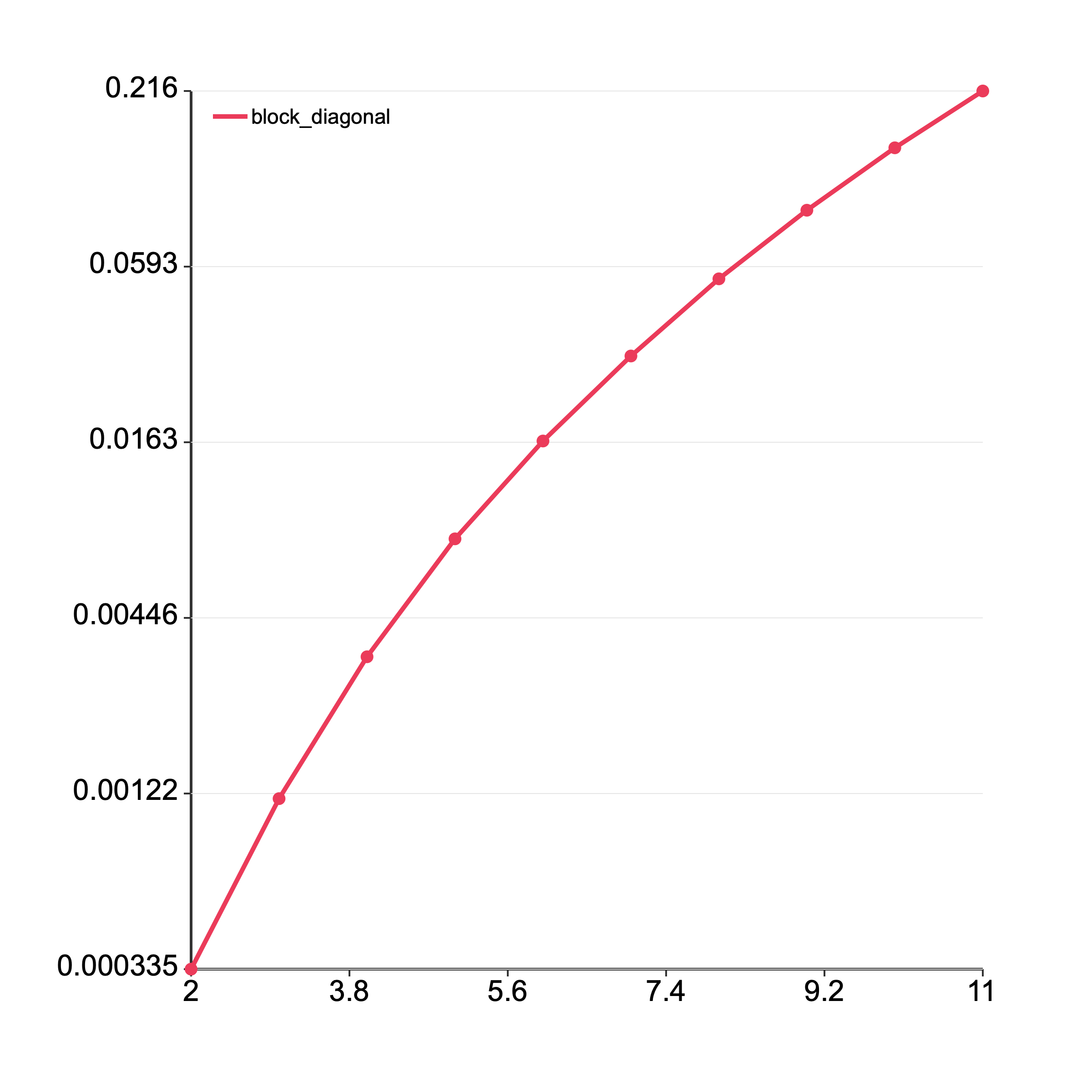}
{\scriptsize (b) Block diagonal, parameter $m$.}
\end{minipage}\hfill
\begin{minipage}{0.32\textwidth}
\centering
\includegraphics[width=\linewidth]{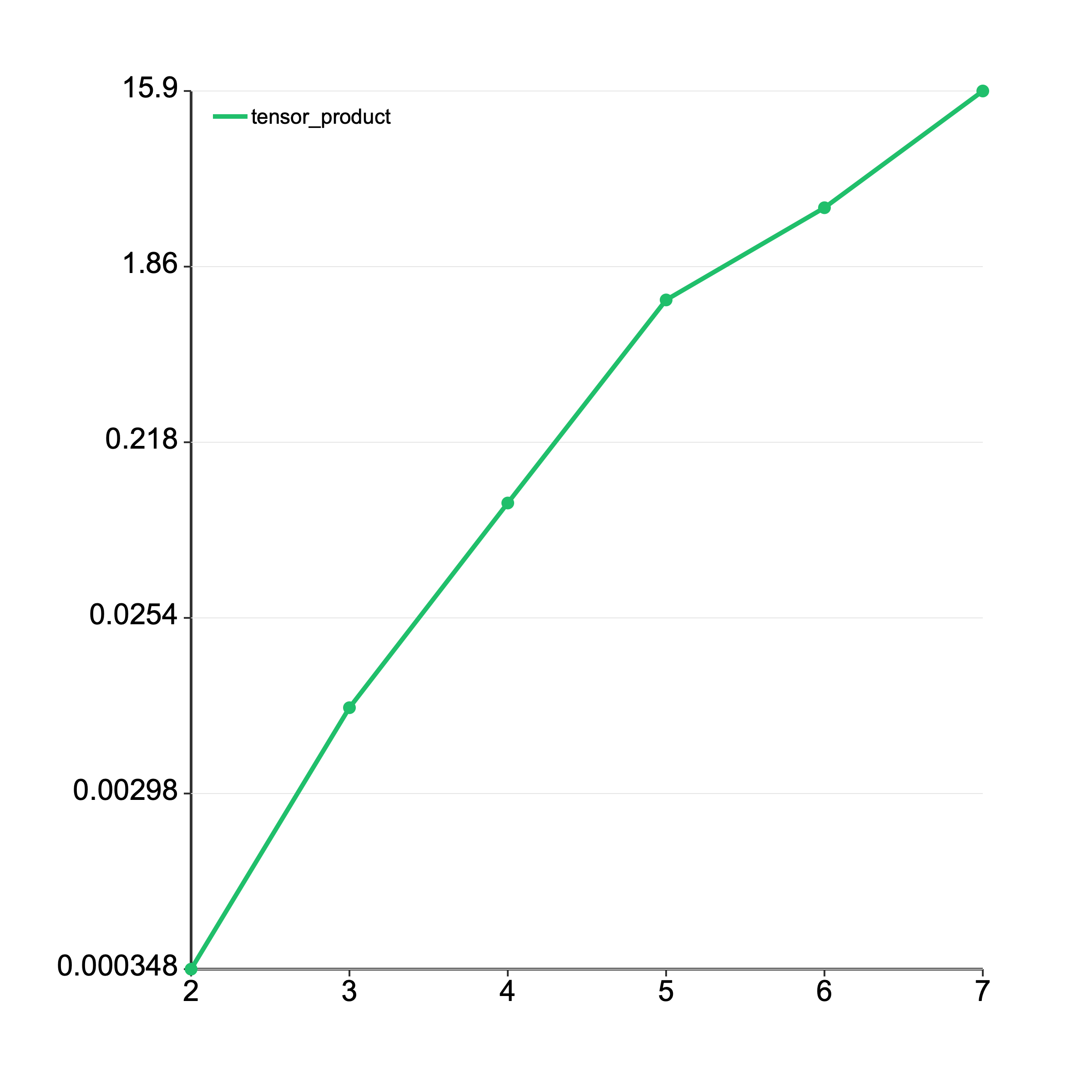}
{\scriptsize (c) Tensor product, parameter $n$.}
\end{minipage}
\caption{Median running time, on a logarithmic scale, for recovering the shared tensor as
the relevant dimension parameter increases. For the block-diagonal construction the matrix
dimension is $d=2m$, while for the tensor-product construction it is $d=n^2$.}
\label{fig:empirical-runtime-dimension}
\end{figure}

Figure~\ref{fig:empirical-runtime-dimension} is consistent with the qualitative behavior
predicted by the structural analysis: as the relevant dimension parameter grows, the
running time of the implemented recovery procedure also increases. This is expected because
the attacks still perform concrete linear-algebraic operations over larger ambient spaces,
even though the subgroup structure makes recovery feasible at these toy dimensions. In the
plotted runs, the recovered tensor agreed with the true shared tensor in every trial, giving
a $100\%$ shared-tensor recovery success rate. We emphasize that this observation is an
empirical sanity check for the specific implementations and parameters considered here; it
does not by itself establish a general complexity bound or a general solver for the
block-diagonal auxiliary problem.
\section{A Generic Linear Decomposition Attack}
\label{sec:generic-lda}

We now show that the public-generator version of the tensor-action framework is
generically vulnerable to the linear decomposition method of Myasnikov and
Roman'kov~\cite{MyasnikovRomankov2014LDA}. The attack treats cubic tensors as vectors in
the $d^3$-dimensional space
\[
V:=\F_p^d\otimes\F_p^d\otimes\F_p^d
\]
and each tensor action as a linear endomorphism of $V$. For
$A,B,C\in\GL_d(\F_p)$, write
\[
\Psi(A,B,C):=A\otimes B\otimes C\in\GL(V),
\]
so that, under the natural tensor-product identification,
\[
\Psi(A,B,C)T=(A,B,C)\cdot T.
\]
Let $G_{\mathcal X}$ and $G_{\mathcal Y}$ be public finite generating sets for
$\mathcal X$ and $\mathcal Y$. From them one obtains public generating sets of
endomorphisms
\begin{align*}
\mathcal U_{\mathcal X}:={}&
\{\Psi(g,I,I),\Psi(I,g,I),\Psi(I,I,g):
g\in G_{\mathcal X}\cup G_{\mathcal X}^{-1}\},\\
\mathcal U_{\mathcal Y}:={}&
\{\Psi(h,I,I),\Psi(I,h,I),\Psi(I,I,h):
h\in G_{\mathcal Y}\cup G_{\mathcal Y}^{-1}\}.
\end{align*}
The monoids generated by these sets contain all actions from
$\mathcal X^3$ and $\mathcal Y^3$, respectively. Moreover, every element generated by
$\mathcal U_{\mathcal X}$ commutes with every element generated by
$\mathcal U_{\mathcal Y}$ because $\mathcal X$ and $\mathcal Y$ commute element-wise.

\begin{theorem}[Generic linear decomposition attack]
\label{thm:generic-linear-decomposition}
Assume that $\mathcal X,\mathcal Y\leq\GL_d(\F_p)$ are given by public finite
generating sets and commute element-wise. Given the public transcript
\[
T,\qquad
T_A=\Psi(A,B,C)T,\qquad
T_B=\Psi(D,E,F)T,
\]
where $A,B,C\in\mathcal X$ and $D,E,F\in\mathcal Y$, an adversary can recover
the shared tensor
\[
S=\Psi(D,E,F)T_A=\Psi(A,B,C)T_B
\]
using deterministic polynomial-time linear algebra over $\F_p$.
\end{theorem}

\begin{proof}
Let $\mathscr A$ be the endomorphism monoid generated by
$\mathcal U_{\mathcal X}$ and consider the orbit span
\[
W_{\mathcal X}(T)
:=\operatorname{span}_{\F_p}\{MT:M\in\mathscr A\}
\subseteq V.
\]
A basis of this space can be computed by iterative closure. Start with $T$, apply every
generator in $\mathcal U_{\mathcal X}$ to every current basis vector, and retain a new
vector whenever it is linearly independent of the vectors already retained. Repeating this
process terminates after at most $\dim_{\F_p}V=d^3$ basis vectors. Keeping the
corresponding words in the public generators produces known endomorphisms
$M_1,\dots,M_r\in\mathscr A$ such that
\[
\{M_1T,\dots,M_rT\}
\]
is a basis of $W_{\mathcal X}(T)$, with $r\leq d^3$.

Set $\alpha:=\Psi(A,B,C)$ and $\beta:=\Psi(D,E,F)$. Since
$\alpha\in\mathscr A$, the public tensor $T_A=\alpha T$ lies in
$W_{\mathcal X}(T)$. Gaussian elimination therefore gives public coefficients
$\lambda_1,\dots,\lambda_r\in\F_p$ satisfying
\[
T_A=\sum_{i=1}^r\lambda_i M_iT.
\]
The adversary applies the same known endomorphisms to Bob's public tensor and outputs
\[
\widehat S:=\sum_{i=1}^r\lambda_iM_iT_B.
\]
Every $M_i$ is generated by Alice-side actions and hence commutes with the Bob-side
endomorphism $\beta$. Consequently,
\begin{align*}
\widehat S
&=\sum_{i=1}^r\lambda_iM_i\beta T
=\beta\sum_{i=1}^r\lambda_iM_iT\\
&=\beta T_A
=\beta\alpha T
=S.
\end{align*}
The basis construction and coefficient recovery use linear algebra in dimension $d^3$
and are polynomial in the public input size. The Kronecker matrices need not be formed
explicitly, since each generator can be applied by a single mode-wise matrix
multiplication.
\end{proof}

For the encryption scheme of Section~\ref{sec:tensor-ko-lee}, Bob's ciphertext contains
$T_B=U$ and
\[
c=m\oplus\Hash(S).
\]
After computing $S$ by Theorem~\ref{thm:generic-linear-decomposition}, the adversary
recovers the plaintext as
\[
m=c\oplus\Hash(S).
\]
The attack does not recover $(A,B,C)$ or $(D,E,F)$ and is independent of the choice of
the tensor sampler. Thus, under the public finite-generator model used by the concrete
constructions in this paper, changing only the tensor distribution or the commuting
subgroups cannot yield a secure instantiation of the present linear-action template. The
construction-specific analyses in Section~\ref{sec:pitfalls-commuting} remain useful because
they expose simpler failure mechanisms and support the empirical illustrations, but they
are not needed for the generic insecurity conclusion.
\section{Discussion: Influence of Scaled-Block Structure}
\label{sec:discussion}
The generic attack in Section~\ref{sec:generic-lda} breaks the public finite-generator
framework, while Section~\ref{sec:pitfalls-commuting} identifies additional failure
mechanisms in the three concrete constructions. Here we isolate the leakage caused by a
common scaled subspace. This discussion does not mitigate the generic attack or supply a
secure instantiation.

\label{subsec:influence_scaled-block}
The block-diagonal reduction of Section~\ref{sec:pitfalls-commuting} shows that a
\emph{large} fixed subspace exposes lower-dimensional algebraic structure. It is therefore
natural to ask what remains
true when one only assumes the presence of a smaller common scaled subspace. The next
lemma records the corresponding structural consequence. It shows that such a subspace still
reduces the tensor-action problem to lower-dimensional matrix relations on the affected
slices, so the resulting problem is easier than the original one. However, for small $r$ this
reduction need not imply an efficient full break. Thus fixed or scaled subspaces should be
viewed as dangerous, while the extent of the danger depends on their size. The proof is
given in Appendix~\ref{proof:scaled-block-structure}.

Let $r\in\{1,\dots,d-1\}$ denote the dimension of the scaled subspace. Assume that
$A,B,C$ scale a common $r$-dimensional coordinate subspace; that is, after a suitable
(possibly public) choice of basis they admit the block forms
\begin{align}
A=\begin{pmatrix}\bar A & 0\\[2pt] 0 & aI_r\end{pmatrix},\qquad
B=\begin{pmatrix}\bar B & 0\\[2pt] 0 & bI_r\end{pmatrix},\qquad
C=\begin{pmatrix}\bar C & 0\\[2pt] 0 & cI_r\end{pmatrix},
\end{align}
where $\bar A,\bar B,\bar C\in \GL_{d-r}(\F_p)$, $a,b,c\in \F_p^\times$, and $I_r$ is
the $r\times r$ identity matrix. For compatibility with this decomposition, we write the
$k$-th slice $T_k$ of the tensor $T$ in block form as
\[
T_k=\begin{pmatrix}T_{k,1} & T_{k,2}\\ T_{k,3} & T_{k,4}\end{pmatrix}.
\]

\begin{lemma}
\label{lem:scaled-block-structure}
Given $T$ and $T'=(A,B,C)\cdot T$, where $A,B,C$ are sampled from subgroup
$\mathcal{X}$ whose elements all scale the same subspace, formally
\[
X_0 = \begin{pmatrix}\bar X_0 & 0\\[2pt] 0 & xI_r\end{pmatrix}, \qquad X_0\in \mathcal{X},
\]
then solving the action $(A,B,C)$ is reduced to solving $(\bar A,\bar B)$, as defined
above, from the systems indexed by $k\in\{d-r+1,\dots,d\}$:
\begin{align*}
&T'_{k,1} = c\,\bar A T_{k,1}\bar B^\top,\qquad \\
& T'_{k,2} = (cb)\,\bar A T_{k,2},\qquad  \\
& T'_{k,3} = (ca)\,T_{k,3}\bar B^\top,\qquad \\
& T'_{k,4} = (cab)\,T_{k,4}.
\end{align*}
\end{lemma}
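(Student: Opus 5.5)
We compute directly from the slice formula \eqref{eq:slice-action}, exploiting that $C$ is block diagonal with the scalar block $cI_r$ in its last $r$ coordinates. For $k\in\{d-r+1,\dots,d\}$, row $k$ of $C$ is $c\,e_k^\top$. The mode-3 mixing therefore collapses on these indices:
\[
T'_k=\sum_{\ell}C_{k,\ell}\,A T_\ell B^\top = c\,A T_k B^\top .
\]
So each affected slice is simply transformed by a left--right sandwich, scaled by $c$. This is the first step, and it uses only the block shape of $C$. In particular, no slice from the upper block of indices leaks into these equations.

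The second step multiplies out $A T_k B^\top$ in $2\times 2$ block form. Here $A=\mathrm{diag}(\bar A,aI_r)$, $B^\top=\mathrm{diag}(\bar B^\top,bI_r)$, and $T_k$ is partitioned as in the statement. Block multiplication gives $\bar A T_{k,1}\bar B^\top$ in the top-left block and $b\,\bar A T_{k,2}$ in the top-right. The bottom-left block is $a\,T_{k,3}\bar B^\top$, and the bottom-right is $ab\,T_{k,4}$. Multiplying by $c$ yields exactly the four displayed relations. The subgroup hypothesis enters here: because every element of $\mathcal X$ scales the same subspace, the sampled $A,B,C$ automatically have the assumed block form with respect to one common basis.

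The third step justifies the word ``reduced.'' From the fourth relation, whenever some $T_{k,4}\neq 0$, the scalar $cab$ is read off as a ratio of entries. The unknowns left in the affected slices are then $\bar A$ and $\bar B$, together with the scalars $a,b,c$. Up to the scalar gauge freedom $(\bar A,\bar B,c)\mapsto(\lambda\bar A,\mu\bar B,c/(\lambda\mu))$, these scalars are absorbed or fixed by the second and third relations. Once $(\bar A,\bar B)$ and the scalars are known, $A$ and $B$ are determined. The remaining unknown $\bar C$ then enters the other slices linearly: for $k\le d-r$, $T'_k=\sum_{\ell\le d-r}\bar C_{k\ell}\,A T_\ell B^\top$. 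With $A$ and $B$ fixed, this is a linear system in the entries of $\bar C$, and it is solvable by Gaussian elimination.

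I expect the main obstacle to be stating precisely what ``reduced'' means. The computation itself is routine; the delicate part is handling the scalar gauge and the nondegeneracy needed to solve the final linear system for $\bar C$ uniquely. That system requires the matrices $A T_\ell B^\top$, $\ell\le d-r$, to be linearly independent. I would state the reduction modulo gauge and under this generic independence condition, in the same conditional spirit as Proposition~\ref{prop:diagonalizable-easy}.
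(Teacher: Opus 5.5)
Your first two steps are exactly the paper's proof. The paper also notes that row $k$ of $C$ is $c\,e_k^\top$ for $k>d-r$, so $T'_k=c\,AT_kB^\top$ on those slices, and then block-multiplies to read off the four relations. It stops there, treating the displayed systems themselves as the content of ``reduced,'' which matches the cautious reading in Remark~\ref{rem:scaled-block-not-unique}.

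Your third step argues a converse the paper never addresses: that $(\bar A,\bar B)$ determines the whole action. Made precise, it runs as follows.
\begin{itemize}
\item Given $(\bar A,\bar B)$, relations 2--4 yield $cb$, $ca$ and $cab$, and hence $a,b,c$. This needs some $T_{k,2}$, some $T_{k,3}$ and some $T_{k,4}$ with $k>d-r$ to be nonzero.
\item $\bar C$ then solves the linear system $T'_k=\sum_{\ell\le d-r}\bar C_{k\ell}\,AT_\ell B^\top$. The solution is unique if $T_1,\dots,T_{d-r}$ are linearly independent.
\end{itemize}
This makes the reduction algorithmic, at the cost of those nondegeneracy hypotheses.

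Two corrections are needed.
\begin{itemize}
\item \textbf{The gauge.} The map you wrote, $(\bar A,\bar B,c)\mapsto(\lambda\bar A,\mu\bar B,c/(\lambda\mu))$ with $a,b$ held fixed, does not preserve relations 2--4. The actual symmetry is $(A,B,C)\mapsto(\lambda A,\mu B,(\lambda\mu)^{-1}C)$, which also rescales $a$, $b$ and $\bar C$. The scalars recovered as above automatically land in the gauge of whatever $(\bar A,\bar B)$ is supplied.
\item \textbf{Spurious solutions.} Step 3 needs $(\bar A,\bar B)$ to be the ground-truth pair up to this gauge, not merely some solution of the slice systems. As the paper's remark stresses, spurious solutions can exist, and for them the $\bar C$-system may be inconsistent.
\end{itemize}
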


\begin{remark}[Why a single-slice ``decomposition'' need not recover the true $(\bar A,\bar B)$]
\label{rem:scaled-block-not-unique}
At first glance, the relation
\begin{equation}
\label{eq:single-slice-decomp}
T'_{k,1}=c\,\bar A\,T_{k,1}\,\bar B^\top
\end{equation}
for a fixed $k\in\{d-r+1,\dots,d\}$ resembles a standard matrix equivalence
(or left--right action) instance: given matrices $U,V$, find $P,Q$ such that
$V=P\,U\,Q$. Over a field, such equivalence problems are algorithmically tractable:
two matrices are equivalent under $\mathrm{GL}\times\mathrm{GL}$ if and only if they have
the same rank, and canonical forms can be computed by elementary row and column
operations \cite{HornJohnson2013}. Consequently, if one ignores the membership
constraints $(A,B)\in \mathcal{X}$ and the coupling among different slices, it is easy to
produce some pair $(\widehat A,\widehat B)$ satisfying \eqref{eq:single-slice-decomp}.

However, such a pair need not coincide with the ground-truth $(\bar A,\bar B)$ used in
the tensor action, nor does it necessarily suffice to reconstruct the full action. The reason
is that $C$ mixes the family of matrices
\[
\{A T_k B^\top\}_{k=1}^d
\]
across different slices, so the relevant object is not a single decomposition
\eqref{eq:single-slice-decomp}, but rather the entire span generated by these matrices.
A decomposition obtained from one slice may instead produce a different family
\[
\{\widehat A T_k \widehat B^\top\}_{k=1}^d,
\]
and there is no reason for the two spans
\[
\mathrm{span}\{A T_k B^\top\}_{k=1}^d
\qquad\text{and}\qquad
\mathrm{span}\{\widehat A T_k \widehat B^\top\}_{k=1}^d
\]
to coincide. In that case, even if $(\widehat A,\widehat B)$ matches the observed block
for one or several slices, one still cannot surely recover
$\widehat C$ reproducing the full tensor action. Thus Lemma~\ref{lem:scaled-block-structure}
should be interpreted as evidence of \emph{structural leakage} caused by the scaled
subspace, rather than as a complete attack for every parameter choice.

In particular, when $r$ is small, the lemma shows only that the original problem reduces
to an easier lower-dimensional matrix problem; it does not show that the resulting problem
is itself easy to solve. The remark above highlights exactly this gap: local decompositions
on individual slices need not assemble into the true global tensor action. Accordingly,
small fixed or scaled subspaces still look dangerous, but whether they yield a practical full
break depends on the size of the subspace and on the residual difficulty of the reduced
problem.
\end{remark}
\section{Conclusion}

We introduced a Ko--Lee-style framework for public-key encryption based on tensor actions
and proved its formal correctness. The construction replaces conjugation by the natural
action of $\GL_d(\F_p)^3$ on cubic tensors and makes commuting-subgroup generation a
central instantiation step. We showed, however, that the resulting public finite-generator
framework admits a generic linear decomposition attack in the $d^3$-dimensional tensor
space. The attack recovers the shared tensor, and hence the plaintext, using polynomial-time
linear algebra without recovering either secret action.

We also cryptanalyzed several natural
commuting-subgroup constructions, including field-extension-based abelian subgroups,
block-diagonal commuting non-abelian subgroups, and tensor-product commuting subgroups,
and identified diagonal, fixed-subspace, tensor-separable, and lower-dimensional matrix
structure exposed by these choices. These construction-specific results range from
conditional action recovery or structural reduction to direct shared-tensor recovery; the
generic attack independently breaks all three public finite-generator instantiations.

The scaled-block analysis further describes the framework's structural leakage, but it does
not provide evidence of security. Under the public
finite-generator model considered here, changing the tensor sampler or commuting subgroups
cannot prevent the generic attack. The paper should therefore be read as a framework
proposal and its cryptanalysis, not as a deployable public-key encryption scheme. Any future
variant would need to depart from at least one of the public linear-action assumptions used
by the decomposition attack and would require a new security analysis.

\newpage
\bibliographystyle{splncs04}
\bibliography{main}

\vspace{0.5cm}

\appendix
\begin{center}
	\Large \textbf{Public Key Encryption on Tensor Isomorphism\\ \vspace{8pt} \textit{Appendices}}
\end{center}

\section{Ko--Lee Paradigm}
\label{sec:ko-lee-template}

We recall the basic Ko--Lee public-key encryption scheme introduced in
\cite{KoLeeCheonHanKangPark2000BraidPKE}. Let $G$ be a non-abelian group with efficiently solvable word problem (given $s,t\in G$, determine whether $s=t$), meaning easy recognition of different representations of the same group element, and let $X,Y \leq G$ be two \emph{commuting} subsets, i.e.,
\[
\forall x\in X,\ \forall y\in Y,\quad xy = yx.
\]
Let $\mathbf{p}\in G$ be public. Alice samples a secret $s\in X$ and publishes
\[
\mathbf{p}' := s\mathbf{p}s^{-1}.
\]
To send a message $m\in\{0,1\}^\ell$, Bob samples $r\in Y$ and computes
\[
\mathbf{p}'' := r\mathbf{p}r^{-1},
\qquad
K_B := r\mathbf{p}' r^{-1} = r(s\mathbf{p}s^{-1})r^{-1}.
\]
He sends the ciphertext
\begin{align*}
    c &:= m \oplus H(K_B),\\
    C &:= \bigl(\mathbf{p}'',\ c\bigr),
\end{align*}
where $H: G \rightarrow \{0,1\}^\ell$ is a cryptographic hash and $\oplus$ denotes bitwise exclusive-OR. Alice computes
\begin{align*}
    K_A &:= s\mathbf{p}'' s^{-1} = s(r\mathbf{p}r^{-1})s^{-1},\\
    m &:= c \oplus H(K_A).
\end{align*}
Correctness holds because $sr=rs$ implies $K_A = K_B$.

\section{Proofs of Construction Pitfalls}
\subsection{Proof of Lemma~\ref{lem:scaled-block-structure}}
\label{proof:scaled-block-structure}
We begin by viewing the tensor action slice by slice. Since
\[
C=\begin{pmatrix}\bar C & 0\\ 0 & cI_r\end{pmatrix},
\]
for any $k\in\{d-r+1,\dots,d\}$, the $k$-th row of $C$ scales the $k$-th coordinate, equivalently the $k$-th slice of the tensor, by $c$. Hence the slice action in Equation~\eqref{eq:slice-action} simplifies to
\begin{equation}
\label{eq:last_slices}
T'_k=c\cdot A T_k B^\top,\qquad k=d-r+1,\dots,d,
\end{equation}
so $c\cdot A T_k B^\top$ is directly known from $T'_k$, the $k$-th matrix in $T'$.

Next, write each slice $T_k$ in the same $(d-r)+r$ block shape to make it compatible with the shape of $A$ and $B$, where:
\[
T_k=\begin{pmatrix}T_{k,1} & T_{k,2}\\ T_{k,3} & T_{k,4}\end{pmatrix},
\qquad
T_{k,1}\in\mathbb{F}_p^{(d-r)\times(d-r)},\;
T_{k,2}\in\mathbb{F}_p^{(d-r)\times r},
\]
\[
T_{k,3}\in\mathbb{F}_p^{r\times(d-r)},\;
T_{k,4}\in\mathbb{F}_p^{r\times r}.
\]
Using the block forms of $A$ and $B$, we expand
\begin{align}
\label{eq:block_expand}
A T_k B^\top
& =
\begin{pmatrix}\bar A & 0\\ 0 & aI_r\end{pmatrix}
\begin{pmatrix}T_{k,1} & T_{k,2}\\ T_{k,3} & T_{k,4}\end{pmatrix}
\begin{pmatrix}\bar B^\top & 0\\ 0 & bI_r\end{pmatrix} 
\\
\\
& =
\begin{pmatrix}
\bar A T_{k,1}\bar B^\top & b\,\bar A T_{k,2}\\[4pt]
a\,T_{k,3}\bar B^\top & ab\,T_{k,4}
\end{pmatrix}.
\end{align}
Multiplying by $c$, we learn from each $k\in\{d-r+1,\dots,d\}$ the following four blocks:
\begin{align*}
&T'_{k,1} = c\,\bar A T_{k,1}\bar B^\top,\qquad
 T'_{k,2} = (cb)\,\bar A T_{k,2},\\
& T'_{k,3} = (ca)\,T_{k,3}\bar B^\top,\qquad
T'_{k,4} = (cab)\,T_{k,4},
\end{align*}
all of which can be explicitly read from $T'_k$.

\subsection{Proof of Lemma~\ref{lem:simul-commute}}
\label{proof:simul-commute}
Fix $h_1,h_2\in \mathcal{H}$ and set $k_1:=P^{-1}h_1P$ and $k_2:=P^{-1}h_2P$.
By assumption $k_1,k_2\in \mathcal{K}$ and $\mathcal{K}$ is commuting, hence $k_1k_2=k_2k_1$.
Conjugating back by $P$ gives
\[
h_1h_2
=
P k_1 P^{-1} P k_2 P^{-1}
=
P(k_1k_2)P^{-1}
=
P(k_2k_1)P^{-1}
=
h_2h_1,
\]
as required.

\subsection{Proof of Proposition~\ref{prop:diagonalizable-easy}}
\label{proof:diagonalizable-easy}
Let $T\in \mathbb{F}_p^{d\times d\times d}$ be public and let
\[
T_A := (A,B,C)\cdot T
\quad\text{with}\quad
A,B,C\in \mathcal{H}.
\]
Define the basis-changed tensors over $\mathbb{F}_{p^m}$:
\[
\widetilde{T} := (P^{-1},P^{-1},P^{-1})\cdot T,
\qquad
\widetilde{T}_A := (P^{-1},P^{-1},P^{-1})\cdot T_A.
\]
Write
\begin{align*}
&\widetilde{A}:=P^{-1}AP=\mathrm{diag}(a_1,\dots,a_d),\quad\\
&\widetilde{B}:=P^{-1}BP=\mathrm{diag}(b_1,\dots,b_d),\quad\\
&\widetilde{C}:=P^{-1}CP=\mathrm{diag}(c_1,\dots,c_d),
\end{align*}
with $a_i,b_j,c_k\in \mathbb{F}_{p^m}^\times$. Then for all $i,j,k\in[d]$,
\begin{equation}
\widetilde{T}_{A,ijk} = a_i\,b_j\,c_k\ \widetilde{T}_{ijk}.
\label{eq:diag-scaling}
\end{equation}
By the support hypothesis, there exists a pivot triple $(i_0,j_0,k_0)$ with
$\widetilde{T}_{i_0j_0k_0}\neq 0$, and the entries
$\widetilde{T}_{ij_0k_0}$, $\widetilde{T}_{i_0jk_0}$, and
$\widetilde{T}_{i_0j_0k}$ are nonzero for every $i,j,k$. Hence the diagonal entries
$(a_i)$, $(b_j)$, and $(c_k)$ can be recovered from
$(\widetilde{T},\widetilde{T}_A)$ in polynomial time up to scalars:
\[
(a_i,b_j,c_k)\mapsto (\lambda a_i,\mu b_j,(\lambda\mu)^{-1}c_k),
\qquad \lambda,\mu\in \mathbb{F}_{p^m}^\times.
\]
Consequently one can reconstruct a representative of the scalar-gauge class of
$(A,B,C)$ by conjugating the recovered diagonal matrices back with $P$, and this
representative induces the same action on $T$.

Since $A,B,C\in \mathcal{H}$, by simultaneous diagonalizability we may conjugate them by the same $P$ to diagonal matrices $\widetilde{A},\widetilde{B},\widetilde{C}$. Using the action definition and the fact that conjugation by $P$ corresponds to a basis change, we obtain Equation~\eqref{eq:diag-scaling}. For any triple $(i,j,k)$ with $\widetilde{T}_{ijk}\neq 0$, define the ratio
\[
R_{ijk} := \frac{\widetilde{T}_{A,ijk}}{\widetilde{T}_{ijk}} = a_i b_j c_k.
\]
Fix the pivot $(i_0,j_0,k_0)$ from the support hypothesis. Then
\[
\frac{R_{ij_0k_0}}{R_{i_0j_0k_0}} = \frac{a_i}{a_{i_0}}, \quad \frac{R_{i_0jk_0}}{R_{i_0j_0k_0}} = \frac{b_j}{b_{j_0}}, \quad \frac{R_{i_0j_0k}}{R_{i_0j_0k_0}} = \frac{c_k}{c_{k_0}}
\]
for every $i,j,k$. For example, set
\[
\widehat a_i:=\frac{R_{ij_0k_0}}{R_{i_0j_0k_0}},\qquad
\widehat b_j:=\frac{R_{i_0jk_0}}{R_{i_0j_0k_0}},\qquad
\widehat c_k:=R_{i_0j_0k}.
\]
Then $\widehat a_i\widehat b_j\widehat c_k=a_i b_j c_k$ for all $i,j,k$, so
$(\widehat A,\widehat B,\widehat C)$ is gauge-equivalent to
$(\widetilde A,\widetilde B,\widetilde C)$. Finally, a gauge-equivalent action is
reconstructed by conjugating these diagonal matrices back with $P$.

\subsection{Proof of Proposition~\ref{prop:block-reduction}}
\label{proof:block-reduction}
By Lemma~\ref{lem:tensor-action-composition}, composition of tensor actions is componentwise:
\[
(A,B,C)\circ (D,E,F)=(AD,\;BE,\;CF).
\]
Since
\[
AD=
\begin{pmatrix}
\bar A&0\\
0&\bar D
\end{pmatrix},
\qquad
BE=
\begin{pmatrix}
\bar B&0\\
0&\bar E
\end{pmatrix},
\qquad
CF=
\begin{pmatrix}
\bar C&0\\
0&\bar F
\end{pmatrix},
\]
for every $i\in\{1,\dots,m\}$, the first $m$ output slices satisfy
\[
\bigl[(A,B,C)\cdot((D,E,F)\cdot T)\bigr]_i
=
\sum_{k=1}^m \bar C_{ik}
\begin{pmatrix}
\bar A T_{k,1}\bar B^\top & \bar A T_{k,2}\bar E^\top\\
\bar D T_{k,3}\bar B^\top & \bar D T_{k,4}\bar E^\top
\end{pmatrix}.
\]
where we divide $T$ into blocks:
\[
T=
\begin{pmatrix}
T_{k,1}& T_{k,2}\\
T_{k,3}& T_{k,4}
\end{pmatrix}
\]

Thus the $i$-th slice breaks into four $m\times m$ blocks:
\begin{align}
\label{eq:block1-new}
\text{(block 1)}\quad & \sum_{k=1}^m \bar C_{ik}\,\bar A T_{k,1}\bar B^\top,\\
\label{eq:block2-new}
\text{(block 2)}\quad & \sum_{k=1}^m \bar C_{ik}\,\bar A T_{k,2}\bar E^\top,\\
\label{eq:block3-new}
\text{(block 3)}\quad & \sum_{k=1}^m \bar C_{ik}\,\bar D T_{k,3}\bar B^\top,\\
\label{eq:block4-new}
\text{(block 4)}\quad & \sum_{k=1}^m \bar C_{ik}\,\bar D T_{k,4}\bar E^\top.
\end{align}
We show that each block is either already known or reducible to the auxiliary mixed-product problem.

\medskip
\noindent\textbf{Block 1.}
For $i\le m$, the first $m$ rows of $C$ are exactly the rows of $\bar C$, while the lower-right identity block in $B$ fixes the first block-column. Hence the first block of the $i$-th slice of $(A,B,C)\cdot T$ is
\[
\sum_{k=1}^m \bar C_{ik}\,\bar A T_{k,1}\bar B^\top,
\]
which is known and exactly equals Equation~\eqref{eq:block1-new}. Therefore block 1 is already known from the public tensor $(A,B,C)\cdot T$.

\medskip
\noindent\textbf{Block 2.}
For $i\le m$, the second block of the $i$-th slice of $(A,B,C)\cdot T$ is
\[
U_i:=\sum_{k=1}^m \bar C_{ik}\,\bar A T_{k,2},
\]
because the lower-right identity block in $B$ acts trivially on the second block-column (which can be verified through block-wise matrix multiplication). On the other hand, since the first $m$ rows of $F$ form the identity, the $i$-th slice of
$(D,E,F)\cdot T$ uses only $T_i$, and its second block equals
\[
V_i:=T_{i,2}\bar E^\top.
\]
Thus, denoting the second block of each matrix in the tensor, which is also known, 
\[
S_i:=T_{i,2}\qquad (1\le i\le m),
\]
the public data  give exactly an instance of the auxiliary mixed-product problem:
\[
U_i=\sum_{k=1}^m \bar C_{ik}\,\bar A S_k,
\qquad
V_i=S_i\bar E^\top.
\]
Solving that instance returns
\[
W_i=\sum_{k=1}^m \bar C_{ik}\,\bar A S_k \bar E^\top
=
\sum_{k=1}^m \bar C_{ik}\,\bar A T_{k,2}\bar E^\top,
\]
which is precisely block 2 in \eqref{eq:block2-new}.

\medskip
\noindent\textbf{Block 3.}
The analysis of Block 3 is symmetric to that of Block 2. Let
\[
R_i:=T_{i,3}^\top \qquad (1\le i\le m).
\]
From $(A,B,C)\cdot T$, the third block of the $i$-th slice is
\[
\sum_{k=1}^m \bar C_{ik}\,T_{k,3}\bar B^\top.
\]
Taking transpose gives
\[
\left(\sum_{k=1}^m \bar C_{ik}\,T_{k,3}\bar B^\top\right)^\top
=
\sum_{k=1}^m \bar C_{ik}\,\bar B T_{k,3}^\top
=
\sum_{k=1}^m \bar C_{ik}\,\bar B R_k.
\]
Likewise, from $(D,E,F)\cdot T$, the third block of the $i$-th slice is
\[
\bar D T_{i,3},
\]
so after transpose we obtain
\[
(\bar D T_{i,3})^\top = T_{i,3}^\top \bar D^\top = R_i \bar D^\top.
\]
Therefore block 3 is again an instance of the same auxiliary mixed-product problem, now with
\[
\bar A \rightsquigarrow \bar B,\qquad \bar E^\top \rightsquigarrow \bar D^\top,\qquad
S_i \rightsquigarrow R_i.
\]
which is exactly the same problem setup as Block 2. Solving this instance gives
\[
\sum_{k=1}^m \bar C_{ik}\,\bar B R_k \bar D^\top,
\]
and transposing back yields
\[
\left(\sum_{k=1}^m \bar C_{ik}\,\bar B R_k \bar D^\top\right)^\top
=
\sum_{k=1}^m \bar C_{ik}\,\bar D T_{k,3}\bar B^\top,
\]
which is exactly block 3 in \eqref{eq:block3-new}. 

\medskip
\noindent\textbf{Block 4.}
The fourth block is
\[
\sum_{k=1}^m \bar C_{ik}\,\bar D T_{k,4}\bar E^\top.
\]
This has the same algebraic form as block 2 after a public permutation of the block decomposition. Indeed, let
\[
P=\begin{pmatrix}
0&I_m\\
I_m&0
\end{pmatrix}.
\]
Conjugating the first two modes by $P$ swaps the top and bottom block rows and columns in every slice, so the lower-right block $T_{k,4}$ is moved to the upper-right position. From another perspective, if we change the direction how we slice a tensor into a list of matrices, then the original block, which is diagonal to the first block, will thereby be adjacent. Under this public relabeling, the roles of the identity block and active block in the definitions of $X$ and $Y$ are exchanged, and the transformed version of block 4 becomes identical in shape to block 2. Hence the same auxiliary mixed-product solver applies verbatim. Equivalently, one may regard this as repeating the block-2 argument after swapping the two $m$-dimensional coordinate halves.

Combining the four cases, blocks 2, 3, and 4 are reducible to the auxiliary mixed-product problem, while block 1 is already known. Therefore the full tensor
\[
(A,B,C)\cdot((D,E,F)\cdot T)
\]
can be computed from the public data once the auxiliary mixed-product problem is solvable.

\subsection{Proof of Proposition~\ref{prop:coarse-fine-flattening-attack}}
\label{proof:coarse-fine-flattening-attack}
Let
\[
U=V\otimes W\cong \F_p^{n^2}
\]
with $V=W=\F_p^n$. We index the standard basis of $U$ by pairs
\[
(i,\alpha)\in[n]\times[n],
\]
where $i$ is the coarse index and $\alpha$ is the fine index. Thus a tensor
$T\in U^{\otimes 3}$ has entries
\[
T_{(i,\alpha),(j,\beta),(k,\gamma)}.
\]

Define the coarse--fine flattening
\[
\mathsf{Flat}:U^{\otimes 3}\longrightarrow \F_p^{n^3\times n^3}, \quad \mathsf{Flat}(T)_{(i,j,k),(\alpha,\beta,\gamma)}
:=
T_{(i,\alpha),(j,\beta),(k,\gamma)}.
\]
This is simply a reindexing of coordinates, so it is an invertible linear map. Write
\[
P:=\mathsf{Flat}(T),
\qquad
P_A:=\mathsf{Flat}(T_A),
\qquad
P_B:=\mathsf{Flat}(T_B).
\]

We first compute the effect of Alice's action. Since
\[
A=M_1\otimes I_n,
\qquad
B=M_2\otimes I_n,
\qquad
C=M_3\otimes I_n,
\]
their entries satisfy
\[
A_{(u,\alpha),(i,\delta)}=(M_1)_{u i}\,\delta_{\alpha\delta},
\]
and similarly for $B$ and $C$. Therefore
\begin{align*}
(T_A)_{(u,\alpha),(v,\beta),(w,\gamma)}
&=
\sum_{i,j,k=1}^n
(M_1)_{u i}(M_2)_{v j}(M_3)_{w k}
T_{(i,\alpha),(j,\beta),(k,\gamma)}.
\end{align*}
Hence, under the flattening $\mathsf{Flat}$,
\[
P_A=LP,
\qquad
L:=M_1\otimes M_2\otimes M_3\in \GL_{n^3}(\F_p).
\]

Next compute the effect of Bob's action. Since
\[
D=I_n\otimes N_1,
\qquad
E=I_n\otimes N_2,
\qquad
F=I_n\otimes N_3,
\]
their entries satisfy
\[
D_{(i,\alpha),(x,\delta)}=\delta_{ix}(N_1)_{\alpha\delta},
\]
and similarly for $E$ and $F$. Therefore
\begin{align*}
(T_B)_{(i,\alpha),(j,\beta),(k,\gamma)}
&=
\sum_{\delta,\varepsilon,\zeta=1}^n
(N_1)_{\alpha\delta}
(N_2)_{\beta\varepsilon}
(N_3)_{\gamma\zeta}
T_{(i,\delta),(j,\varepsilon),(k,\zeta)}.
\end{align*}
Thus
\[
P_B=PR^\top,
\qquad
R:=N_1\otimes N_2\otimes N_3\in \GL_{n^3}(\F_p).
\]

Applying the same flattening to the shared tensor
\[
T_{AB}=(A,B,C)\cdot T_B
\]
gives
\[
\mathsf{Flat}(T_{AB})=LPR^\top.
\]

If $P$ is invertible, then the shared tensor is recovered immediately from the public data:
\[
\mathsf{Flat}(T_{AB})=P_A P^{-1} P_B.
\]

Invertibility of $P$ is not necessary. Let $r=\mathrm{rank}(P)$. By Gaussian elimination,
one can compute invertible matrices $U_0,V_0\in \GL_{n^3}(\F_p)$ such that
\[
P=U_0
\begin{pmatrix}
I_r & 0\\
0 & 0
\end{pmatrix}
V_0.
\]
Set
\[
G:=V_0^{-1}
\begin{pmatrix}
I_r & 0\\
0 & 0
\end{pmatrix}
U_0^{-1}.
\]
Then $PGP=P$, so $G$ is an inner generalized inverse of $P$. Consequently,
\[
P_A G P_B
=
(LP)G(PR^\top)
=
L(PGP)R^\top
=
LPR^\top
=
\mathsf{Flat}(T_{AB}).
\]
Thus the adversary can compute the flattened shared tensor directly from the public
transcript and then apply $\mathsf{Flat}^{-1}$ to recover $T_{AB}$ itself. This breaks the
corresponding tensor-action Ko--Lee instantiation in polynomial time.

\section*{AI Tool Disclosure}
The authors used ChatGPT for assistance with literature review, language editing, and
implementation of empirical experiments. All mathematical claims, proofs, experimental
results, and final manuscript content were independently checked and edited by the authors.

\end{document}